\documentclass[10pt,orivec]{llncs}
\usepackage{amssymb}
\usepackage{stmaryrd}
\usepackage{amsmath}
\usepackage{color}
\usepackage{verbatim}
\usepackage{cite}
\usepackage{algorithmic}
\usepackage{algorithm}
\usepackage{multirow}
\usepackage{hyperref}
\usepackage{quantikz}
\hypersetup{
    colorlinks=true,
    linkcolor=blue,
    filecolor=blue,      
    urlcolor=blue,
    citecolor=blue,
}

\renewcommand{\epsilon}{\varepsilon}
\newcommand{\functiondot}{\,\mathord{\cdot}\,}
\newcommand{\bigO}[1]{\mathcal{O}(#1)}

\newcommand{\setnocond}[1]{\{#1\}}
\newcommand{\setcond}[2]{\{\,#1 : #2\,\}}

\newcommand{\norm}[1]{\|#1\|}
\newcommand{\opnorm}[1]{\norm{#1}_{o}}

\newcommand{\radius}[1]{\varrho(#1)}
\newcommand{\maxmag}[1]{\kappa(#1)}

\newcommand{\epsneigh}[1][\epsilon]{\mathcal{N}_{#1}}
\DeclareMathOperator{\modulo}{mod}
\DeclareMathOperator{\lcm}{lcm}
\DeclareMathOperator{\id}{id}

\newcommand{\naturals}{\mathbb{N}}

\renewcommand{\dh}{\ensuremath{\mathcal{D(H)}}}

\newcommand{\lh}{\ensuremath{\mathcal{L(H)}}}
\newcommand{\ap}{AP}

\newcommand{\e}{\ensuremath{\mathcal{E}}}
\newcommand{\f}{\ensuremath{\mathcal{F}}}
\newcommand{\g}{\ensuremath{\mathcal{G}}}
\newcommand{\h}{\ensuremath{\mathcal{H}}}
\newcommand{\ii}{\ensuremath{\mathcal{I}}}
\renewcommand{\l}{\ensuremath{\mathcal{L}}}
\newcommand{\p}{\ensuremath{\mathcal{P}}}
\newcommand{\q}{\ensuremath{\mathcal{Q}}}
\newcommand{\x}{\ensuremath{\mathcal{X}}}
\newcommand{\m}{\ensuremath{\mathcal{M}}}
\renewcommand{\o}{\ensuremath{\mathcal{O}}}
\newcommand{\s}{S}
\newcommand{\bbh}{\ensuremath{\mathcal{L(\lh)}}}

\newcommand{\trj}{\sigma}
 
\newcommand{\vp}{\varphi}
\newcommand{\coloneqq}{:=}
\newcommand{\Coloneqq}{::=}
%------------------------
\newcommand{\tr}{{\rm tr}} %trace
 
\newcommand{\supp}[1]{\mathit{supp}( #1 )}
\DeclareMathOperator{\sspan}{span}
\newcommand{\spec}{\mathit{spec}}

\renewcommand{\bra}[1]{\langle #1 |}
\renewcommand{\ket}[1]{| #1 \rangle}
\renewcommand{\braket}[2]{\langle #1 | #2 \rangle}
\newcommand{\ketbra}[2]{ | #1 \rangle \langle #2 |}
\newcommand{\pair}[2]{\langle #1 , #2 \rangle}

\newcommand{\trjst}{\sigma}

\newcommand{\lfst}{L}

\newcommand{\mdst}{\models}

\renewcommand{\orcidID}[1]{\smash{\href{http://orcid.org/#1}{\protect\raisebox{-1.25pt}{\protect\includegraphics{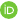}}}}}
\newcommand{\mvs}{\fontfamily{mvs}\fontencoding{U}\fontseries{m}\fontshape{n}\selectfont}
\newcommand\Letter{{\mvs\char66}}

\title{Measurement-based Verification of Quantum Markov Chains}

\titlerunning{Measurement-based Model Checking Quantum Statistical Systems}
\author{
Ji Guan\inst{1}\textsuperscript{(\Letter)}\orcidID{0000-0002-3490-0029}, 
Yuan Feng\inst{2}\orcidID{0000-0002-3097-3896}, 
Andrea Turrini\inst{1,3}\textsuperscript{(\Letter)}
\orcidID{0000-0003-4343-9323}
\and  
Mingsheng Ying\inst{4}\orcidID{0000-0003-4847-702X}
}
\institute{
Key Laboratory of System Software (Chinese Academy of Sciences) and State Key Laboratory of Computer Science, Institute of Software, Chinese Academy of Sciences, Beijing 100190, China \\ 
\texttt{\{guanj,turrini\}@ios.ac.cn}
\and 
Department of Computer Science and Technology, Tsinghua University, Beijing 100084, China
\and
Institute of Intelligent Software, Guangzhou, Guangzhou 511458, China
\and The Centre for Quantum Software and Information,  University of Technology Sydney, NSW 2007, Australia.
}

\begin{document}
\maketitle
\pagestyle{plain}
\begin{abstract}
Model-checking techniques have been extended to analyze quantum programs and communication protocols represented as quantum Markov chains, an extension of classical Markov chains. To specify qualitative temporal properties, a subspace-based quantum temporal logic is used, which is built on Birkhoff–von Neumann atomic propositions. These propositions determine whether a quantum state is within a subspace of the entire state space. In this paper, we propose the measurement-based linear-time temporal logic MLTL to check quantitative properties. MLTL builds upon classical linear-time temporal logic (LTL) but introduces quantum atomic propositions that reason about the probability distribution after measuring a quantum state. To facilitate verification, we extend the symbolic dynamics-based techniques for stochastic matrices described by Agrawal et al. (JACM 2015) to handle more general quantum linear operators (super-operators) through eigenvalue analysis. This extension enables the development of an efficient algorithm for approximately model checking a quantum Markov chain against an MLTL formula. To demonstrate the utility of our model-checking algorithm, we use it to simultaneously verify linear-time properties of both quantum and classical random walks. Through this verification, we confirm the previously established advantages discovered by Ambainis et al. (STOC 2001) of quantum walks over classical random walks and discover new phenomena unique to quantum walks.
\end{abstract}

\section{Introduction}
Model checking is a formal verification technique that is used to ensure the correctness of a system based on a given specification~\cite{baier2008principles}. In recent years, model checking has been applied to quantum systems and has become a powerful tool for verifying the behaviors and properties of quantum programs or communication protocols~\cite{ying2019model,ying2021model}. Similar to the classical case, the main components of model checking quantum systems are the system model and the temporal logic, which are used to mathematically describe the evolution of the system and specify its temporal properties, respectively.

\textbf{System Model.} Quantum Markov Chains (QMCs), which are the quantum extension of classical \emph{Markov chains} (MCs), provide an exceptional paradigm for modeling the evolution of quantum systems in various scenarios, including quantum control~\cite{Ticozzi2008a}, quantum information theory~\cite{guan2018structure}, quantum programming~\cite{ying2016foundations}, and quantum communication systems~\cite{wolf2012quantum}. Notably, quantum walks, which are a special class of QMCs, have been successfully employed in the design of quantum algorithms (for a survey of this research line, see~\cite{ambainis2003quantum,kempe2003quantum}). A QMC $\q$ is defined as a triple $\q = (\h, \e, \rho_{0})$ that corresponds to a classical Markov chain $(S, P, s_0)$, where $\h$ is a finite-dimensional Hilbert (linear) state space instead of the finite state set $S$, $\e$ is a \emph{super-operator} on $\h$ instead of the transition stochastic matrix $P$ on $S$, and $\rho_{0}$, which is a \emph{density matrix}, represents the initial state instead of $s_0$. 
Intuitively, the super-operator $\e(\functiondot)$, which is a linear mapping, models the dynamics of the system and transforms a state (density matrix) $\rho$ into another state $\e(\rho)$. 
Some special cases of QMCs have emerged, such as open quantum walks~\cite{Attal2012} and classical-quantum (super-operator valued) Markov chains~\cite{feng2013model}, where the latter resemble classical Markov chains but with the transition stochastic matrix $P = \setnocond{p_{i,j}}_{i,j \in S}$ being replaced by a transition set $\setnocond{\e_{i,j}}_{i,j \in S}$ of super-operators, while still maintaining the finite state set $S$.

\textbf{Temporal Logics.} The dynamic extension of Birkhoff–von Neumann quantum logic~\cite{birkhoff1936logic} was proposed to specify a wide range of temporal properties of quantum systems. In this approach, atomic propositions are used to describe qualitative properties of a quantum system, represented as closed subspaces of the system's state Hilbert space (whether or not a quantum state $\rho$ is in a subspace $\x$ of $\h$). Furthermore, QMCs are abstracted as subspace transition systems, and the temporal properties of interest are represented by infinite sequences of sets of atomic propositions (subspaces)~\cite{ying2021model}. This subspace-based temporal logic allows for the specification of linear-time properties, such as invariants and safety properties, for quantum automata (a simplified form of QMCs)~\cite{moore2000quantum}. Model-checking algorithms have been developed for the subspace-based temporal logic in~\cite{ying2014model}, and the (un)decidability of model checking quantum automata has been studied in~\cite{li2014decidable}. However, a limitation of model checking QMCs against subspace-based temporal logics is that it can only handle qualitative properties, which means that only simple examples can be checked. Given that the power of quantum systems lies in their probabilistic nature, it is crucial to be able to check probabilistic (quantitative) properties.

To address this issue, in this paper we propose a measurement-based linear-time temporal logic to capture these properties and develop a model-checking algorithm to check the quantitative properties of QMCs.
More specifically, we observe that the properties of the quantum systems in question can be described using quantum measurements, which extract classical (probabilistic) information from quantum states. Building on this observation, we introduce measurement-based atomic propositions to describe static quantitative properties, namely, the measurement outcome probability of a quantum state under a measurement. This can be seen as a generalization of the subspace-based atomic propositions of the Birkhoff–von Neumann quantum logic~\cite{birkhoff1936logic}, where a quantum state $\rho$ in a subspace $\x \subseteq \h$ can be regarded as having a measurement outcome probability of $1$ under the measurement onto $\x$. By combining with standard linear-time temporal logic (LTL)~\cite{Pnueli77LTL,baier2008principles}, we obtain MLTL, the measurement-based LTL, to specify the temporal quantitative properties of quantum systems.

In order to develop an algorithm for model checking QMCs against MLTL formulas, we extend the Thiagarajan's approximate verification~\cite{agrawal2015approximate} of a stochastic transition matrix $P$ to encompass more general linear operators in quantum systems, e.g., the super-operators. The key technique for this generalization is based on the eigenvalue analysis of QMCs, which simplifies the previous work based on the \emph{Bottom Strongly Connected Component (BSCC) decomposition}~\cite{baier2008principles} of the state space of classical Markov chains~\cite{agrawal2015approximate}. Subsequently, we provide an effective procedure for the approximate model checking of QMCs against MLTL formulas. In Section~\ref{sec:experiments}, we provide several case studies to illustrate how our model and algorithm can be applied in a quantum walk. 
 These case studies help to verify the previously established advantages of quantum walks over classical random walks, as discovered by Ambainis et al.~\cite{ambainis2001one}. Additionally, we explore new phenomena unique to quantum walks when we verify the same MLTL formulas on both types of walks.

In summary, this paper makes the following main contributions:
\begin{enumerate}
    \item \emph{Introducing} a quantum temporal logic, called measurement-based linear-time logic (MLTL), which allows for specifying quantitative properties of QMCs.
    \item \emph{Generalizing} symbolic dynamics-based verification techniques of the transition stochastic matrix given in~\cite{agrawal2015approximate} to more general quantum linear operators (super-operators) by \emph{eigenvalue analysis}; based on this, a model-checking algorithm for QMCs against MLTL is developed.
    \item  \emph{Verifying} numerous quantitative properties of quantum walks through our model-checking algorithm as case studies.  This serves to validate the established advantages of quantum walks over their classical counterparts and discover new phenomena unique to quantum walks.
\end{enumerate}

\subsection{Related Works and Challenges}
To provide a suitable context for our work, let us delve deeper into the discussion of related works and the challenges we encounter in this paper.

\textbf{Hybrid vs.\@ Quantum Temporal Logic.} In a previous study~\cite{feng2013model}, a specialized type of quantum Markov chain known as super-operator-valued quantum Markov chain was proposed. This model was designed for the purpose of modeling quantum programs and quantum cryptographic protocols. Additionally, a quantum extension of the probabilistic computation tree logic (PCTL) called quantum computation tree logic (QCTL) was introduced, along with a model-checking algorithm specifically tailored for this Markov model.

However, these hybrid temporal logic approaches heavily rely on the classical state graph and are not applicable to quantum systems. This is because quantum systems have a continuous state space and an infinite number of states, making it impossible to obtain a connection graph. In order to address this limitation and specify the properties of quantum Markov chains, we propose a measurement-based linear-time temporal logic (MLTL) approach that does not require a connection graph. This allows us to directly reason about the transitions of measurement outcome probability distributions. Our MLTL approach adopts classical LTL with quantum physical interpretation, providing the advantage of utilizing existing classical techniques and facilitating contributions from the classical model-checking community to the field of quantum computing.

\textbf{Classical vs.\@ Quantum Markov Chains.} The main technique used for model checking classical Markov chains in~\cite{agrawal2015approximate} involves studying the periodicity of states in sub-chains obtained through BSCC decomposition~\cite{baier2008principles}, a widely-used method in the field of model checking Markov chains. However, when it comes to the quantum extension of this decomposition, as described in~\cite{ying2013reachability}, the BSCC decomposition of QMCs is not unique but there are infinitely many decompositions due to the continuous state space, unlike the classical case. Consequently, we cannot directly generalize Thiagarajan's \emph{approximate verification}~\cite{agrawal2015approximate}. To overcome this unique challenge posed by quantum mechanics, we propose a method based on eigenvalue analysis to directly explore the periodic properties of QMCs. As a result, QMCs are not always periodically stable like classical Markov chains. We provide a way to determine the periodic stability depending on the initial states of QMCs. With these efforts, we successfully extend the idea of approximate model checking to work for QMCs, pushing the application boundary of such model-checking techniques to more general linear systems.

\textbf{Model Checking Quantum Walks.} 
As a result of the phenomenon known as \emph{quantum interference}~\cite{venegas2012quantum}, quantum walks can propagate at significantly faster or slower rates compared to their classical counterparts~\cite{ambainis2001one}. Quantum walks have gained attention due to their potential application in the development of randomized algorithms, with various quantum algorithms incorporating this concept~\cite{ambainis2003quantum,kempe2003quantum}. Notably, in certain search problems, quantum walks can offer quadratic or exponential speedups compared to classical algorithms.
    
Previously, researchers have conducted case-by-case studies on the dynamic properties of quantum walks, requiring the introduction of various techniques depending on the specific property and underlying topological structure of the walks. In this paper, we introduce a model-checking method that overcomes this limitation by allowing for the automatic verification of a wide range of properties of the walks. To model quantum walks, we utilize QMCs, and to specify the relevant dynamical properties, we employ MLTL formulas. By simultaneously verifying classical and quantum walks using our model-checking algorithm, we can confirm the advantages of quantum walks that have already been established in~\cite{ambainis2001one}, as well as discover new phenomena that are distinct from classical random walks. We anticipate that these new phenomena, discovered by our model-checking algorithm, will contribute to the development of more efficient quantum walk-based algorithms with enhanced speedup capabilities.

\section{Preliminaries}\label{Sec:preliminaries}
In this section, we aim to explain the three components that appear in a QMC $\q = (\h, \e, \rho_0)$, as well as quantum measurements, an essential part of our MLTL.

\textbf{Quantum State Space $\h$.} The \emph{state space} of a quantum system is a finite-dimensional linear space $\h$, which is commonly known as the \emph{Hilbert space} in the field of quantum computing. A \emph{quantum pure state} is represented by a unit complex column vector $\psi$ in $\h$. In the field of quantum computing, the \emph{bra-ket notation} is widely used to represent quantum states, making it easier to perform calculations that frequently arise in quantum mechanics. This notation uses angle brackets, $\langle$ and $\rangle$, along with a vertical bar $|$, to construct ``bras'' and ``kets'' that represent row and column vectors, respectively. The following list provides the notation used in this paper to represent linear algebra concepts:
\begin{enumerate}
    \item $\ket{\psi}$ represents a unit complex column vector (quantum pure state) in $\h$, labeled with $\psi$;
    \item $\bra{\psi}\coloneqq \ket{\psi}^\dagger$ denotes the complex conjugate and transpose of $\ket{\psi}$;
    \item $\braket{\psi_1}{\psi_2} \coloneqq \bra{\psi_1} \ket{\psi_2}$ represents the inner product of $\ket{\psi_1}$ and $\ket{\psi_2}$;
    \item $\ketbra{\psi_1}{\psi_2} \coloneqq \ket{\psi_1}\cdot\bra{\psi_2}$  denotes the outer product of $\ket{\psi_1}$ and $\ket{\psi_2}$.
\end{enumerate}

It is important to note that any vector in $\h$ can be linearly represented by a \emph{computational basis}, which is a set of mutually orthogonal unit vectors. In order to compare with classical Markov chains denoted as $(S, P, \mu_0)$, we use the finite state set $S = \setnocond{s_0, \dotsc, s_{d-1}}$ to label the computational basis of $\h$, with dimension $d$, as $\setnocond{\ket{s_0}, \dotsc, \ket{s_{d-1}}}$. Here, $\ket{s_k}$ is a unit column vector with the $k$-th element being $1$ and the remaining elements being $0$ (the index starts from $0$). Then $\h$ is denoted as $\h = \sspan\setnocond{\ket{s_0}, \dotsc, \ket{s_{d-1}}}$.

Using this basis, any quantum state $\ket{\psi}$ in $\h$ can be expressed as a linear combination of $\setnocond{\ket{s_0}, \dotsc, \ket{s_{d-1}}}$ with complex coefficients $a_k$: $\ket{\psi} = \sum_{k=0}^{d-1} a_k \ket{s_k}$ with the normalization condition $\braket{\psi}{\psi} = \sum_{k=0}^{d-1} a_{k} a_{k}^*=1$, where $a_k^*$ is the complex conjugate of $a_k$.  In the case of a $2$-dimensional space, we have:
\[\ket{s_0}=\left(\begin{matrix}
    1\\
    0
\end{matrix}\right )\qquad \ket{s_1}=\left(\begin{matrix}
    0\\
    1
\end{matrix}\right )\qquad\ket{\psi}=\left(\begin{matrix}
    a_0\\
    a_1
\end{matrix}\right )\qquad\bra{\psi}=\left(\begin{matrix}
    a_0^*, a_1^*
\end{matrix}\right ).\]
It is evident that a quantum pure state has the capability to depict a \emph{superposition} of state set $S=\{s_0,\ldots,s_{d-1}\}$ as $\ket{\psi}=\sum_{k=0}^{d-1}a_k\ket{s_k}$. The superposition is a unique feature of quantum systems and the main reason for the advantages of quantum algorithms over their classical counterparts~\cite{nielsen2010quantum}.

\textbf{Quantum Mixed State $\rho$.} In quantum mechanics, uncertainty is a common characteristic of quantum systems, arising from quantum noise and measurements. To describe the uncertainty of possible quantum pure states, the concept of \emph{quantum mixed state} $\rho$ on $\h$ is introduced. It can be represented as
\begin{equation}\label{Eq:ensemble}
    \begin{aligned}
        \rho = \sum_{k} p_{k}\ket{\psi_{k}}\bra{\psi_{k}}.
    \end{aligned}
\end{equation}
Here, $\setnocond{(p_{k}, \ket{\psi_{k}})}_{k}$ represents an ensemble, indicating that the quantum state is at $\ket{\psi_{k}}$ with probability $p_{k}$. This concept can also be used to describe the uncertainty of a classical probability distribution, where each $\ket{\psi_k}$ represents $s_k$ by $\ket{\psi_k} = \ket{s_k}$. Specifically, a (row) probability distribution $\mu = (p_0, \dotsc, p_{d-1})$ over the state set $S = \setnocond{s_0, \dotsc, s_{d-1}}$ can be represented by a quantum mixed state. This representation involves a diagonal matrix on $\h$, where the diagonal elements correspond to the probabilities $p_k$ as follows.
\begin{equation}\label{Eq:encode}
    \begin{aligned}
        \rho_{\mu} = \sum_{k} p_{k}\ket{s_{k}}\bra{s_{k}}=\mathrm{diag}(p_0,\ldots,p_{d-1}).
    \end{aligned}
\end{equation}
Hence, a quantum mixed state $\rho$ is an extension of a probability distribution $\mu$. Generally, the quantum uncertainty is more complex because the ensemble decomposition in Eq.~\eqref{Eq:ensemble} of a quantum state $\rho$ can have infinitely many variants. This means that $\rho$ can represent multiple ensembles $\setnocond{(p_{k},\ket{\psi_{k}})}_{k}$ simultaneously.

From a mathematical perspective, a quantum mixed state $\rho \in \lh$ is a linear operator ($d$-by-$d$ matrix) on $\h$ that satisfies three conditions: 1) \emph{Hermitian} $\rho^\dagger = \rho$; 2) \emph{positive semi-definite} $\bra{\psi} \rho \ket{\psi} \geq 0$ for all $\ket{\psi} \in \h$; and 3) \emph{unit trace} $\tr(\rho) = \sum_{k} \bra{s_k} \rho \ket{s_k} = 1$, where $\tr(\rho)$ is the trace of $\rho$ and represents the sum of the diagonal elements of $\rho$. Here, $\lh$ denotes the set of linear operators on $\h$. Let $\dh \subseteq \lh$ be the set of all quantum mixed states on $\h$. To avoid any ambiguity, in the subsequent discussion, the term ``quantum states'' will specifically refer to quantum mixed states, given that we are considering the broader scenario.

\textbf{Quantum Evolution $\e$.} In the realm of quantum computing, the evolution of a quantum system is commonly represented by the equation
\begin{equation}\label{Eq:evolution}
	\rho' = \e(\rho).
\end{equation}
Here, $\e$ is referred to as a \emph{super-operator}. Mathematically, $\e(\functiondot)$ is a linear mapping from $\lh$ to $\lh$, allowing for the transformation of one quantum state $\rho$ into another $\rho'$. As stated by the \emph{Kraus representation theorem}~\cite{choi1975completely}, $\e$ can be characterized by a finite set of $d$-by-$d$ matrices $\setcond{E_{k}}{0 \leq k \leq m-1} \subseteq \lh$, where $m \leq d^{2}$. The expression is given by $\e(\rho) = \sum_{k=0}^{m-1} E_{k} \rho E_{k}^{\dag}$ for all $\rho\in \dh$.

This representation also satisfies the trace-preserving condition $\sum_{k} E_{k}^{\dag} E_{k} = I$, where $I$ is the identity matrix on $\h$ and  $\dag$ denotes the complex conjugate and transpose of matrices. In other words, for all $\rho \in \dh$, we have $\tr(\e(\rho)) = \tr(\rho)$ by
$ \tr(\e(\rho)) = \tr(\sum_{k}E_k\rho E_k^\dagger)=\tr(\sum_{k}E_k^\dagger E_k\rho)=\tr(\rho).$

An example of the use of super-operators can be found in Section~\ref{Sec:quantum_walk}, where a super-operator is employed to represent the evolution of quantum walks. In the degenerate scenario, the Kraus operator is simplified to only include a \emph{unitary matrix} $U$ on $\h$ (where $U^\dagger U=U^\dagger U=I$), and $\e(\rho)=U\rho U^\dagger$.

\textbf{Quantum Measurement.} To extract information from a quantum state, a \emph{quantum measurement} is performed. This measurement yields a classical outcome which is represented as a probability distribution over the possible results. Mathematically, a quantum measurement is described by a set $\setnocond{M_{k}}_{k \in \o}$ of positive semi-definite matrices on the state (Hilbert) space $\h$, where $\o$ is a finite set of possible outcomes. The measurement process is probabilistic: if the quantum system is in a state $\rho$ before the measurement, the probability of obtaining the outcome $k$ is given as follows.
\[p_{k} = \tr(M_{k} \rho).\]
Note that the measurement $\setnocond{M_{k}}_{k \in \o}$ satisfies the \emph{unity condition} $\sum_{k} M_{k}=I$, which guarantees that the total probability of all outcomes is equal to $1$. In other words, $\sum_{k} \tr(M_{k} \rho) = \tr(\sum_{k} M_{k} \rho) = \tr(\rho) = 1$.

In the case where we want to extract the classical probability distribution $\mu$ encoded in the quantum state $\rho_{\mu}$ as shown in Eq.~\eqref{Eq:encode}, we can choose the measurement $\setnocond{M_{k} = \ketbra{s_k}{s_k}}_{s_k \in S}$. In this scenario, the measurement probability of obtaining outcome $s_k$ is given by
\[
    \tr(\ketbra{s_k}{s_k} \rho_\mu) 
    = \sum_{l} \bra{s_l} \ketbra{s_k}{s_k} \rho_{\mu} \ket{s_l} 
    = \bra{s_k} \rho_{\mu} \ket{s_k} 
    = \bra{s_k}(\sum_{j} p_j \ketbra{s_j}{s_j}) \ket{s_k} 
    = p_k.
\]
The above equations rely on the mutual orthogonality of the computational basis $\setnocond{\ket{s_0}, \dotsc, \ket{s_{d-1}}}$, meaning that $\braket{s_j}{s_l} = 0$ for $j \neq l$, and also the fact that each $\ket{s_k}$ is normalized, represented by $\braket{s_k}{s_k}=1$.

It should be noted that after the measurement, the state will collapse or be altered, depending on the measurement outcome $k$, which distinguishes quantum computation from classical computation. For instance, in the case of a \emph{projection measurement} denoted as $\setnocond{P_{k}}_{k \in \o}$, the state after obtaining outcome $k$ is given by $P_{k}\rho P_{k}/p_k$ with $p_k=\tr(P_k\rho)$. Here, each positive semi-definite matrix $P_{k}$ represents a projection operator ($P_{k}^2=P_{k}$), and a specific example of a projection measurement is the above measurement $\setnocond{\ketbra{s_k}{s_k}}_{s_k \in S}$. Another concrete example is provided in Example~\ref{Exa:Quantum_walk} for the case of quantum walks. For other scenarios involving post-measurement states that are not encountered in this paper, please refer to~\cite[Section 2.2.3]{nielsen2010quantum}.

\section{Quantum Markov Chains}

In this section, we present the formal definition of QMCs. For a more detailed discussion, we refer the interested readers to~\cite{ying2021model}.

\begin{definition}\label{Def:QMC}
    A QMC is a tuple $\g = (\h, \e, \rho_{0})$, where $\h$ is a finite-dimensional Hilbert space, $\e$ is a super-operator on $\h$, and $\rho_{0} \in \dh$ is an initial state. 
\end{definition}

The execution of $\g$ is naturally described by the trajectory of quantum states:
\begin{equation}
\label{q-traj}
	\trjst(\g) \coloneqq \rho_{0}, \e(\rho_{0}), \e^{2}(\rho_{0}), \ldots.
\end{equation}

QMCs are a direct extension of classical Markov chains and can simulate their execution; see the following example. 

\begin{example}[Classical Markov chains as QMCs]\label{Exa:modeling_MC}
Not surprisingly, any classical Markov chain $(\s, P, \mu_{0})$ can be effectively encoded as a QMC.
We can use $\h = \sspan\setnocond{\ket{s_0}, \dotsc, \ket{s_{d-1}}}$ to encode $\s$, and $\e$ can be a super-operator with Kraus operators $\setnocond{E_{k,l} = \sqrt{p_{k,l}} \ketbra{s_{l}}{s_{k}}}_{s_k,s_l \in \s}$ that encode the probabilities $p_{k,l}$ of $P$.

It can be easily verified that $\e$ is a valid super-operator, i.e., $\sum_{k,l} E_{k,l}^\dagger E_{k,l} = I$. Furthermore, let $\rho_{0} = \sum_{s \in \s} \mu_{0}(s) \ket{s} \bra{s}$ encode the initial probability distribution $\mu_{0}$. Then, the QMC $(\h, \e, \rho_{0})$ can fully simulate the behavior of $(\s, P, \mu_{0})$ in the sense that for all $n \geq 0$:
\[
    \e^{n}(\rho_{0}) = \sum_{k=0}^{d-1} \mu_{n}(s_k) \ket{s_k} \bra{s_k} = (\mu_{n}(s_0), \dotsc, \mu_{n}(s_{d-1})).
\]
Here, $\mu_{n} = \mu_{0} P^{n}$, $\mu_{n}(s_k)$ represents the $k$-th entry of $\mu_{n}$, and $\e^{0} = \id_{\h}$, which is the \emph{identity super-operator} with only one Kraus operator $\setnocond{I}$. The proof follows a straightforward induction on $n$.
\end{example}

\subsection{Quantum Walks}\label{Sec:quantum_walk}
The case study of this paper is to explore the new and advanced properties of quantum walks compared to classical random walks by model checking QMCs. This exploration begins with modeling quantum walks using QMCs. Before presenting this, we introduce one-dimensional quantum walks with absorbing boundaries in the following example.   Furthermore, we also need the bra-ket notation  $\ket{\psi_1}\ket{\psi_2} \coloneqq \ket{\psi_1} \otimes \ket{\psi_2}\in \h_1 \otimes \h_2$, which represents the composition (tensor product) of $\ket{\psi_1}$ and $\ket{\psi_2}$ in the Hilbert spaces $\h_1$ and $\h_2$, respectively.

\begin{example}[Quantum walk with absorbing boundaries]\label{Exa:Quantum_walk}
We consider an unbiased quantum walk on a one-dimensional lattice indexed from $s_0$ to $s_d$, with the boundaries $s_0$ and $s_{d}$ being absorbing. 

\textbf{State Space.} Let $\h_{p} = \sspan\setnocond{\ket{s_0}, \dotsc, \ket{s_{d}}}$ be a $(d+1)$-dimensional Hilbert space, where the pure state (unit vector) $\ket{s_k}$ represents the position $s_k$ for each $0 \leq k \leq d$. 
In order to facilitate the evolution of the quantum walk, an additional coin space is required. Let $\h_{c}$ be the coin (direction) space, which is a $2$-dimensional Hilbert space with orthonormal basis states $\ket{L}$ and $\ket{R}$, indicating the directions left and right, respectively.
Therefore, the state space of the quantum walk is $\h = \h_{p} \otimes \h_{c}$. 
 
\textbf{Initial State.} The initial state is $\rho_{0}=\ketbra{\psi_0}{\psi_0}$ with $\ket{\psi_{0}} = \ket{s_k} \ket{X}$, indicating the initial position $s_k$ and direction $X$, where $0 \leq k \leq d$ and $X \in \setnocond{L, R}$.

\textbf{Evolution.} Each step of the walk consists of three operations:

First, measure the current position of the system to determine whether it is absorbing positions $s_0$ or $s_d$. If the position is $s_0$ or $s_d$, then the walk terminates. The measurement is described by
\[\setnocond{M_{\mathit{yes}} = (\ket{s_0}\bra{s_0} + \ket{s_n}\bra{s_n}) \otimes I_{c}, M_{\mathit{no}} = I -M_{\mathit{yes}}}.\]
Here, $I_{c}$ and $I$ are the identity operators on $\h_{c}$ and $\h$, respectively. According to the principles of measurements, from the current state $\rho$, the walk terminates at state $\rho_\mathit{yes}$ with probability $p_{\mathit{yes}}$, and it continues with state $\rho_{\mathit{no}}$, with probability $p_{\mathit{no}}$, where \[p_{\mathit{x}} = \tr( M_{\mathit{x}} \rho) \quad \text{ and }\quad
\rho_{\mathit{x}} = {M_{\mathit{x}} \rho M_{\mathit{x}}^\dagger}/p_{\mathit{x}} \quad \text{ for } \quad \mathit{x} = \mathit{yes}, \mathit{no}.\]

Second, apply an unbiased ``coin-tossing'' (unitary) operator on the coin space $\h_c$. This operator, denoted as $U_H$, is given by:
\[U_{H} = \frac{1}{\sqrt{2}}\left(\begin{matrix}
				1&1\\
				1&-1
			\end{matrix}\right) = \ket{+}\bra{L} + \ket{-}\bra{R}.\]
Here, $\ket{\pm} = \frac{1}{\sqrt{2}} (\ket{L} \pm \ket{R})$. The operator  $U_H$ represents the \emph{Hadamard operator} on $\h_{c}$, where the probabilities of going left and right are both equal to $0.5$.

Third, perform a shift (unitary) operator on the space $\h$. The shift operator, denoted as $U_{S}$, is given by:
\[U_S = \sum_{k=0}^{d} \ket{s_{k \ominus 1}} \bra{s_k} \otimes \ket{L} \bra{L} + \ket{s_{k \oplus 1} }\bra{s_k} \otimes \ket{R} \bra{R}.\]
The intuitive meaning of the operator $U_S$ is that the system walks one step left or right based on the coin state on $\h_c$. Here, $\oplus$ and $\ominus$ represent addition and subtraction modulo $d+1$, respectively.

In summary, in each step, given the current state $\rho$ as the input, the quantum walk transforms $\rho$ into $\rho' = U \rho_{no} U^\dagger$ with a probability $p_{no}$, where $U = U_S (I_{p} \otimes U_H)$ and $I_p$ is the identity operator on $\h_p$. Additionally, the walk terminates at state $\rho_{yes}$ with a probability $p_{yes}$. The resulting state $\rho'$ then serves as the input state for the subsequent step of the quantum walk.
\end{example}
For a better understanding, Appendix~\ref{appendix:quantum_walk} provides a visual representation (Fig.~\ref{fig:quantum_walks}) that showcases the evolution of the quantum walk in Example~\ref{Exa:Quantum_walk} from a physical perspective. 

Now, we demonstrate how to represent the quantum walk given in Example~\ref{Exa:Quantum_walk} using a QMC.
To begin, we introduce the super-operator $\e$, which incorporates the unitary operator $U$ representing the combined evolution of the second and third operations of the quantum walk. For any given $\rho \in \dh$, we let $\e(\rho)$ be
\[
\e(\rho) = U M_{\mathit{no}} \rho M_{\mathit{no}}^\dagger U^\dagger + M_{\mathit{yes}} \rho M_{\mathit{yes}}^\dagger.
\]
It is worth noting that $M_{\mathit{no}} M_{\mathit{yes}} = 0$ and $M_{\mathit{yes}}^2 = M_{\mathit{yes}}$, indicating that once the QMC terminates, its state remains unchanged. By using induction on the number of steps, we can easily verify that the evolution of the quantum walk can be modeled by the QMC $(\h_{p} \otimes \h_{c}, \e, \rho_{0} = \ket{\psi_{0}} \bra{\psi_{0}})$.

\section{Measurement-based Linear-time Temporal Logic}
\label{Sec:logic}

In this section, we introduce a specification language called \emph{Measurement-based Linear-time Temporal Logic} (MLTL) for describing the properties of quantum systems. MLTL is similar to ordinary LTL, but its atomic propositions are interpreted in the context of quantum computing. It also expands on the subspace-based atomic propositions introduced by Birkhoff and von Neumann~\cite{birkhoff1936logic}.

\subsection{Measurement-based Atomic Propositions}
\label{sec:MAP}
As mentioned in Section~\ref{Sec:preliminaries}, a quantum measurement is the process of extracting classical information from quantum states. A quantum measurement can be represented by a finite set $\setnocond{M_{k}}_{k \in \o}$ of positive semi-definite matrices with the unity condition $\sum_{k} M_k = I$. Each matrix $M_{k}$ is called a \emph{measurement operator} and it is associated with the probability $\tr(M_k \rho)$ of obtaining the outcome $k$ when measuring the quantum state $\rho$. Mathematically, a measurement operator $M$ is positive semi-definite and satisfies $M \leq I$, which means that $I - M$ is also positive semi-definite.

In the following discussion, our main focus is on the measurement operator $M_k$ and its associated probability $\tr(M_k \rho)$. Therefore, we will often disregard the specific outcome value $k$ and simply refer to the measurement operator as $M$. Additionally, we will not explicitly mention the measurement consisting of $M$, as a binary quantum measurement $\setnocond{M, I-M}$ can be determined based on $M$ itself (as seen in Example~\ref{Exa:Quantum_walk} with the measurement operators $M_{\mathit{yes}}$ and $M_{\mathit{no}}$).

Our atomic propositions are designed to estimate the probability of the outcome $\tr(M \rho)$ after measuring the quantum state $\rho$, given the measurement operator $M$.
\begin{definition}
\label{def-sat}
    Given a Hilbert space $\h$,
    \begin{enumerate}
	\item an atomic proposition in $\h$ is a pair $\pair{M}{\ii}$, where $M$ represents a measurement operator on $\h$ and $\ii \subseteq [0,1]$ is an interval; 
	\item a state $\rho \in \dh$ satisfies $\pair{M}{\ii}$, written $\rho \models \pair{M}{\ii}$, if the outcome probability of the measurement operator $M$ applied to $\rho$ falls within the interval $\ii$, that is, $\tr(M \rho) \in \ii$.
    \end{enumerate}
\end{definition}

In terms of the expressive power of our measurement-based atomic propositions, it is worth noting that they extend the existing atomic propositions in both the classical and quantum domains. 
\begin{description}
    \item[Classical:] the atomic proposition $\pair{M}{\ii}$ of $\rho$ expands upon the proposition $\pair{s_k}{\ii}$ of $\mu$ in interval linear-time temporal logic. This classical logic has been used to specify (static) properties of classical Markov chains~\cite{agrawal2015approximate} and continuous-time Markov chains~\cite{guan2022probabilistic}. The proposition asserts that the probability of state $s_k\in\s$ in distribution $\mu$ over $\s$ falls within the interval $\ii$. The extension is achieved by utilizing $\rho_{\mu} = \sum_{k} \mu(s_k) \ketbra{s_k}{s_k}$ in Eq.~\eqref{Eq:encode} and $M = \ketbra{s_k}{s_k}$, resulting in $\tr(M \rho_{\mu}) = \mu(s_k)$. Consequently, $\tr(M \rho_{\mu}) \in \ii$ if and only if $\mu(s_k) \in \ii$.

    \item[Quantum:] furthermore, an atomic proposition $\pair{M}{\ii}$ of a mixed state $\rho$ can encode the subspace-based proposition $\x \subseteq \h$ of a pure state $\ket{\psi}$ in the Birkhoff–von Neumann quantum logic. This proposition asserts that $\ket{\psi} \in \x$. This observation is made by setting $M = P_{\x}$, which represents the projection onto $\x$, $\ii = [1,1]$, and $\rho = \ketbra{\psi}{\psi}$. As a result, we can conclude that $\rho = \ketbra{\psi}{\psi} \models \pair{P_{\x}}{[1,1]}$ (i.e., $\tr(P_{\x} \rho) = 1$) is equivalent to $\ket{\psi} \in \x$.
\end{description}
To demonstrate the practicality of our atomic propositions, we will present a series of specific instances in Example~\ref{ex:qwc} later. These examples will effectively illustrate the characteristics and properties of quantum walks.

From an algorithmic perspective, we gather a finite number of pairs $\pair{M}{\ii}$ as the set of atomic propositions that are based on quantum measurements. This set is denoted as $\ap$.

\subsection{Quantum Linear-Time Temporal Logic}
\label{ssec:ltl}
In this section, we enhance the linear-time temporal logic~\cite{Pnueli77LTL,baier2008principles} by incorporating the newly introduced measurement-based atomic propositions. This will result in the formation of our measurement-based linear-time temporal logic MLTL.

The MLTL formulas, which involve the use of measurement-based $\ap$, are defined according to the following syntax:
\[
    \vp \Coloneqq  \mathbf{true} \mid a \mid \neg\vp \mid \vp_{1} \lor \vp_{2} \mid \bigcirc\vp \mid \vp_{1} U \vp_{2},
\]
where $a = \pair{M}{\ii} \in \ap$. 
We can also derive additional standard Boolean operators and temporal modalities such as $\lozenge$ (\emph{eventually}) and $\square$ (\emph{always}) using conventional methods.

The semantics of MLTL is also defined in a familiar manner. 
For any infinite word $\xi$ over $2^{\ap}$ and any MLTL formula $\vp$ over $\ap$, the satisfaction relation $\xi \models \vp$ is defined by induction on the structure of $\vp$:
\begin{itemize}
\item 
    $\xi \models \mathbf{true}$ always holds;
\item 
    $\xi \models a$ iff $a \in \xi[0]$;
\item 
    $\xi \models \neg\vp$ iff it is not the case that $\xi \models \vp$ (written $\xi \not\models \vp$);
\item 
    $\xi \models \vp_{1} \lor \vp_{2}$ iff $\xi \models \vp_{1}$ or $\xi \models \vp_{2}$;
\item 
    $\xi \models \bigcirc\vp$ iff $\xi[1+] \models \vp$; and
\item 
    $\xi \models \vp_{1} U \vp_{2}$ iff there exists $k \geq 0$ such that $\xi[k+] \models \vp_{2}$ and for each $0 \leq j < k$, $\xi[j+] \models \vp_{1}$,
\end{itemize}
where $\xi[k]$ and $\xi[k+]$ denote the $(k+1)$-th element and the $(k+1)$-th suffix of $\xi$, respectively. 
The indexes start from zero so that, say, $\xi = \xi[0+]$.  In addition, the \emph{semantics} $\l_\omega(\vp)$ of $\vp$ is defined as the language containing all infinite words over $2^{\ap}$ that satisfy $\vp$: $\l_\omega(\vp) = \setcond{\xi \in (2^{\ap})^{\omega}}{\xi \models \vp}.$

Now, let us extend the satisfaction relation $\rho \models a$ to $\g \models \vp$ between a QMC $\g$ and an MLTL formula $\vp$.  
To this end, we introduce the labeling function: 
\begin{equation}
\label{l1-fun}
	\lfst \colon \dh \to 2^{\ap},\qquad \lfst(\rho) = \setcond{a \in \ap}{\rho \models a}
\end{equation}
which assigns to each quantum state $\rho$ the set of atomic propositions in $\ap$ satisfied by $\rho$. 
We further extend the labeling function to sequences of quantum states by setting $\lfst(\rho_{0}, \rho_{1}, \ldots) = \lfst(\rho_{0}), \lfst(\rho_{1}), \ldots$ as usual. 
Then we define: 
\[
	\text{$\g \mdst \vp$  if and only if $\lfst(\trjst(\g))\in\mathcal{L}_\omega(\vp)$}
\] 
where $\trjst(\g)$ is the state trajectory of $\g$ as defined in Eq.~\eqref{q-traj}.

We now exhibit realistic settings where our approach leads to valuable insights for the quantum walk presented in Example~\ref{Exa:Quantum_walk}.

\begin{example}[Quantum walk with absorbing boundaries, continued]\label{ex:qwc}
	Given a finite set of intervals $\setnocond{\ii_l \subseteq [0,1]}_{l=0}^L$, let 
	\[
	\ap = \setcond{\pair{M_{s_k}}{\ii_l}}{M_{s_k} = \ket{s_k} \bra{s_k} \otimes I_{c}, 0 \leq k \leq d, 0 \leq l \leq L}
	\] with the atomic proposition $\pair{M_{s_k}}{\ii_l}$ asserting that $\tr(M_{s_k} \rho) \in \ii_l$ for $0 \leq k \leq d$ and $0 \leq l \leq L$. This allows us to trace the probability distribution on all positions (including boundaries) of the quantum walk.
		
	First, we can discuss the advantages of quantum walks over their classical counterparts, as discovered by Ambainis et al. in~\cite{ambainis2001one}. When given the initial state $\ket{s_1} \ket{R}$, the absorbing probability at position $0$ tends to $1/\sqrt{2}$ in the limit as $d \to \infty$, whereas in the classical case, the value is $1$. This property can be expressed as the MLTL formula $\vp_{0} = \lozenge \square \pair{M_{s_0}}{\ii_0}$, where $\ii_0 = [1/\sqrt{2}-\gamma, 1/\sqrt{2}+\gamma]$ and $\gamma>0$ is a given precision parameter.
  
	Next, we examine two properties of interest that demonstrate significant differences between quantum walks and their classical counterparts. To the best of our knowledge, these findings are new.
	\begin{enumerate}
		\item In the classical case, the absorbing probability at position $d$ is always smaller than 0.5 if the walk starts from the middle position and $d$ is even. However, this does not necessarily hold for quantum walks. Let $\vp_{1} = \square \pair{M_{s_d}}{\ii_{1}}$, where $\ii_{1} = [0,0.5)$, be the MLTL formula that expresses this property. Assuming the initial state is $\ket{s_{d/2}}\ket{R}$, we will see in Section~\ref{sec:experiments} that $\vp_{1}$ is false when $d=20$. 
			 
		\item Let $\ii_2=(0.4,1]$ and $\vp_{2} = \square(\pair{M_{s_{d-1}}}{\ii_{2}} \implies \pair{M_{s_1}}{\ii_{2}})$, which states that at any given time point, if the probability at position $d-1$ is larger than $0.4$, then the probability at position $1$ is also larger than $0.4$. In the classical case, $\vp_2$ is true due to the symmetry of the distribution over positions. However, as shown in Section~\ref{sec:experiments}, $\vp_2$ does not hold in the quantum case. Therefore, the distribution of the unbiased quantum walk over positions is asymmetric even when the walk starts from the middle position. 
	\end{enumerate}
\end{example}

\section{Model Checking Algorithm}
\label{sec:modelchecking}
In this section, we present an algorithm that can be used to approximately verify the satisfaction of $\g \mdst \vp$. For the convenience of the reader, we put all proofs of theoretical results in the appendix.

Using the notations in Eq.~\eqref{l1-fun}, we can formally define the model checking problem for $\trjst(\g)$ against MLTL formulas as follows.

\begin{problem}
\label{prob:quantumstate}
    Given a QMC $\g = (\h, \e, \rho_{0})$, a labeling function $\lfst$, and an MLTL formula $\vp$, the task is to decide whether $\g \mdst \vp$, which means determining whether $\lfst(\trjst(\g)) \in \l_\omega(\vp)$.
\end{problem}

Although MLTL extends LTL with quantum atomic propositions, the traditional model-checking techniques for LTL cannot be directly applied to solve Problem~\ref{prob:quantumstate}. This is because the state space of a QMC is continuous and uncountably infinite, even in the case where the state space is finite-dimensional. In contrast, classical LTL model checking deals with a discrete and finite state set. However, QMCs can simulate Markov chains as seen in Example~\ref{Exa:modeling_MC}, and interval LTL formulas in~\cite{agrawal2015approximate} can be represented by MLTL formulas as discussed in Section~\ref{Sec:logic}. Despite this, the counter-example presented in~\cite{agrawal2015approximate} shows that the language $\setnocond{\lfst(\trjst(\g))}$ is generally not $\omega$-regular. Therefore, the standard approach~\cite{vardi1999probabilistic} of model checking $\omega$-regular languages cannot directly solve Problem~\ref{prob:quantumstate} either.

To address this, we turn to the problem of approximate verification for QMCs, following the techniques introduced in~\cite{agrawal2015approximate}. The main idea in~\cite{agrawal2015approximate} involves studying the periodicity of states in \emph{finitely many} sub-chains obtained through BSCC decomposition~\cite{baier2008principles}, and a key property of Markov chains known as \emph{periodic stability}, which ensures their stability. However, extending this idea to the quantum setting is not straightforward. It has been proven that the BSCC decomposition of QMCs is not unique, but rather has \emph{infinitely many} possibilities~\cite{ying2013reachability,Baumgartner2011,guan2018decomposition} due to the continuous state space. Additionally, we will demonstrate below that QMCs do not exhibit periodic stability.

\begin{definition}
\label{def:e_{p}eriodicstable}
    A QMC $\g = (\h, \e, \rho_{0})$ is called \emph{periodically stable} if there exists an integer $\theta > 0$ such that 
        $\lim_{n \to \infty} \e^{n \theta}(\rho_{0}) = \rho^{*}$
    for some limiting quantum state $\rho^{*}$. 
    The smallest value of $\theta$, if it exists, is referred to as the \emph{period} of $\g$ and is denoted as $p(\g)$. Moreover, $\setcond{\e^{k}(\rho^*)}{0\leq k <p(\g)}$ are called the \emph{periodically stable states} of $\g$ as $\lim_{n \to \infty} \e^{n \theta}(\e^k(\rho_{0})) = \e^k(\rho^{*})$.
\end{definition}
In the classical case, any Markov chain $(\s, P, \mu_{0})$ is periodically stable~\cite{gallager2012discrete}. This means that there exists an integer $\theta > 0$ such that
        $\lim_{n \to \infty} \mu_{0} P^{n \theta} = \mu^{*}$
    for some limiting distribution $\mu^{*}$. 
    Furthermore, $\theta$ is independent of $\mu_{0}$.
    However, this property does not hold for QMCs, as demonstrated by the following example.
\begin{example}
\label{ex:u}
Let $\h = \sspan\setnocond{\ket{s_0}, \ket{s_1}}$ and $U = \ket{s_0} \bra{s_0} + e^{i 2 \pi \psi} \ket{s_1} \bra{s_1}$ be a unitary operator on $\h$, where $\psi$ is an irrational number. It can be proven that the QMC $(\h, \e_{U}, \rho_{0})$, where $\e_{U}(\rho) = U \rho U^{\dag}$, is not periodically stable for any generic initial state $\rho_0$.
In fact, a simple calculation reveals that 
\[
\e_{U}^{n \theta}(\rho_{0}) = \rho_{00} \cdot \ket{s_0} \bra{s_0} + \rho_{11} \cdot \ket{s_1} \bra{s_1} + e^{-i 2 \pi \psi n \theta} \rho_{01} \cdot \ket{s_0} \bra{s_1} + e^{i 2 \pi \psi n \theta} \rho_{10} \cdot \ket{s_1} \bra{s_0}
\]
where $\rho_{ij} = \bra{s_i} \rho \ket{s_j}$. It is worth noting that since $\psi$ is irrational, the set $\setcond{e^{i 2 \pi \psi m}}{m \in \naturals}$ is dense in the unit circle~\cite{hardy1979introduction}. Therefore, for any integer $\theta > 0$, the limit $\lim_{n \to \infty} \e_{U}^{n \theta}(\rho_{0})$ cannot exist, except when $\rho_{01} = \rho_{10} = 0$.
\end{example}

Note that the operator $\e_{U}$ in Example~\ref{ex:u} has four eigenvalues (with multiplicity taken into account): $1$, $1$, $e^{-i 2 \pi \psi}$, and $e^{i 2 \pi \psi}$. The corresponding eigenvectors are $\ket{s_0} \bra{s_0}$, $\ket{s_1} \bra{s_1}$, $\ket{s_0} \bra{s_1}$, and $\ket{s_1} \bra{s_0}$, respectively. We have proven that the system $(\h, \e_{U}, \rho_{0})$ is periodically stable if and only if the initial state $\rho_{0}$ does not have any components in the directions of $\ket{s_0} \bra{s_1}$ and $\ket{s_1} \bra{s_0}$. Interestingly, this is precisely why a QMC cannot be periodically stable (see Appendix~\ref{sec:periodicstable}). To put it in another way, a QMC is periodically stable only if the initial quantum state does not contain any components in the directions defined by the eigenvectors of the relevant super-operator (linear operator) corresponding to eigenvalues of the form $e^{i 2 \pi \psi}$ where $\psi$ is an irrational number. This result also offers an efficient method to verify the periodic stability of a given QMC $\g$ (and determine the period $p(\g)$). Specifically, symbolic computation can be used to check whether the eigenvalues of QMCs are irrational by analyzing their algebraic representations and mathematical properties. Such a method can be utilized to confirm the periodic stability of the quantum walk in Example~\ref{Exa:Quantum_walk}. Therefore, the approximate verification technique described in this paper can be applied to the quantum walk.

\begin{proposition}\label{Prop:periodicity}
    Given a QMC $\g=(\h,\e,\rho_0)$ with $\dim(\h)=d$, there is a way to check whether $\g$ is periodically stable with computational complexity $\bigO{d^{8}}$. If it is indeed stable, we can compute the period $p(\g)$ with a complexity of $\bigO{d^{8}}$.
\end{proposition}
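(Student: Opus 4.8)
The plan is to pass to the matrix (``natural'') representation of the super-operator. Since $\e$ is a linear map on the $d^{2}$-dimensional space $\lh$, I would fix a basis of $\lh$ — a basis of Hermitian operators, so that the resulting matrix is real, since $\e$ preserves Hermiticity — and let $M \in \reals^{d^2 \times d^2}$ be the matrix of $\e$ in that basis and $v_0 \in \reals^{d^2}$ the coordinate vector of $\rho_0$, so that $\e^{n}(\rho_0)$ has coordinates $M^{n} v_0$. Forming $M$ from the Kraus operators $\{E_k\}$ of $\e$ (for instance $M = \sum_k \overline{E_k} \otimes E_k$ under vectorization) costs $\bigO{d^6}$. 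The one external fact I would invoke is that, because $\e$ is a quantum channel, every eigenvalue of $M$ lies in the closed unit disk and the eigenvalues of modulus $1$ are semisimple (only trivial Jordan blocks; a nontrivial block would make $\norm{M^{n}}$ grow, contradicting that each $M^{n}$ represents a channel and is hence contractive in trace norm).

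Next I would carry out the spectral analysis. Decompose $\mathbb{C}^{d^2} = V_{<1} \oplus V_{=1}$ into the sum of the generalized eigenspaces of $M$ for eigenvalues of modulus $<1$ and the (genuine, by semisimplicity) eigenspace for eigenvalues of modulus $1$, and write $v_0 = u + \sum_{j=1}^{r} w_j$, where $\lambda_1, \dots, \lambda_r$ are the distinct unit-modulus eigenvalues that actually occur in $v_0$ and $0 \neq w_j$ lies in the $\lambda_j$-eigenspace. Then
\[
    M^{n} v_0 = M^{n} u + \sum_{j=1}^{r} \lambda_j^{\,n}\, w_j,
\]
and $M^{n} u \to 0$ because the spectral radius on $V_{<1}$ is $<1$. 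Hence, for an integer $\theta > 0$, the limit $\lim_{n \to \infty} \e^{n\theta}(\rho_0)$ exists iff $\lim_{n \to \infty} \lambda_j^{\,n\theta}$ exists for every $j$, and since $|\lambda_j| = 1$ this holds iff $\lambda_j^{\,\theta} = 1$ for every $j$. Writing $\lambda_j = e^{2\pi i \psi_j}$, this forces every $\psi_j$ to be rational, say $\psi_j = p_j/q_j$ in lowest terms. So $\g$ is periodically stable iff every unit-modulus eigenvalue occurring in $v_0$ is a root of unity, and in that case the set of valid $\theta$ is exactly $\lcm(q_1, \dots, q_r)\, \naturals_{>0}$, so $p(\g) = \lcm(q_1, \dots, q_r)$ (where $r \geq 1$, since $\tr(\e^{n}(\rho_0)) = 1$ rules out $v_0 \in V_{<1}$).

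It remains to make this effective and to account for the cost. All inputs ($\{E_k\}$, hence $M$, and $\rho_0$) have entries in a fixed number field $K$, so the characteristic polynomial $\chi_M$ — of degree $d^{2}$ — can be computed symbolically, e.g.\ by the Berkowitz algorithm in $\bigO{(d^2)^4} = \bigO{d^8}$ operations, and the decomposition $V_{<1} \oplus V_{=1}$ together with the coordinates $u, w_1, \dots, w_r$ obtained by (fraction-free) linear algebra on $M$, within the same bound. For each unit-modulus eigenvalue $\lambda$ occurring in $v_0$ one must decide whether $\lambda$ is a root of unity and, if so, find its order: since $\lambda$ is algebraic of degree at most $d^2\, [K:\mathbb{Q}]$, a primitive $q$-th root of unity among the eigenvalues must satisfy $\varphi(q) \leq d^2\, [K:\mathbb{Q}]$, which bounds $q$ polynomially, so one tests the finitely many candidate orders $q$ by checking $\lambda^q = 1$ symbolically (equivalently, extract the cyclotomic factors of $\chi_M$); then $p(\g)$ is the lcm of the orders found. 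The main obstacle — and the part needing genuine care — is exactly this symbolic step: certifying rigorously whether an eigenvalue equals $e^{2\pi i \psi}$ with $\psi$ irrational or rational, bounding the period that results, and verifying that this, together with the semisimplicity of the peripheral spectrum (which is what licenses using genuine eigenvectors instead of Jordan chains), all fits inside the claimed $\bigO{d^8}$ budget.
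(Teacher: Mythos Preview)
Your proposal is correct and follows the same overall strategy as the paper: pass to the $d^{2}\times d^{2}$ matrix representation of $\e$, isolate the unit-modulus eigenvalues that actually contribute to $\rho_{0}$, characterize periodic stability as ``all such eigenvalues are roots of unity,'' and read off $p(\g)$ as the lcm of their orders, with the $\bigO{d^{8}}$ cost coming from Jordan/characteristic-polynomial computations on a $d^{2}\times d^{2}$ matrix. The paper's proof (its Proposition~\ref{prop:periodicstable} and Corollary~\ref{cor:pstable}) reaches the identical characterization and period formula.

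Where you differ is in the spectral argument itself. The paper first restricts to the support of a maximal stationary state (invoking \cite{guan2017super}), then uses a block-diagonal decomposition $\h=\bigoplus_{k}\h_{k}$ of the Kraus operators and analyzes each block $\m_{k,l}=\sum_{i}E_{i,k}\otimes E_{i,l}^{*}$ separately, appealing to structural results from \cite{guan2018structure} on the peripheral spectrum of these blocks. You instead invoke only the single fact that peripheral eigenvalues of a channel are semisimple (the paper cites this too, from \cite{wolf2012quantum}) and argue directly on the eigen-decomposition of $v_{0}$; this is more elementary and self-contained. Conversely, you are considerably more explicit than the paper on the symbolic side---Berkowitz for $\chi_{M}$, the degree bound $\varphi(q)\le d^{2}[K:\mathbb{Q}]$ to detect roots of unity, extraction of cyclotomic factors---whereas the paper leaves this step at the level of ``symbolic computation can be used.'' Both routes buy the same conclusion and the same complexity bound; yours trades the structural machinery for a cleaner linear-algebra core plus a heavier (but still standard) symbolic component.
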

The evaluation process for periodic stability, as described in Proposition~\ref{Prop:periodicity}, involves examining the eigenvalues and eigenvectors of the super-operator $\e$.
To calculate these eigenvalues and eigenvectors of $\e$ with Kraus operators $\setnocond{E_{k}}_{k}$, we can use the matrix representation $\m_{\e}$ ~\cite{ying2013verification} of $\e$, given by
$\m_{\e} = \sum_{k} E_{k} \otimes E_{k}^{*}.$
Here, $E_k^{*}$ represents the entry-wise complex conjugate of $E_k$. The linear operator (matrix) $\m_{\e}$ acts on $\h \otimes \h$. Based on this, we can derive the following lemma:

\begin{lemma}\label{Lem:eigenvalue}
    For a non-zero $A \in \lh$, $A$ is an eigenvector of $\e$ corresponding to the eigenvalue $\lambda$ if and only if $\ket{A}$ is an eigenvector of $\m_{\e}$ corresponding to the eigenvalue $\lambda$. In other words, $\e(A) = \lambda A$ if and only if $\m_{\e} \ket{A} = \lambda \ket{A}$.
\end{lemma}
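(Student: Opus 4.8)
The plan is to unfold both sides of the claimed equivalence into coordinates with respect to the computational basis and check that they literally coincide. The key device is the standard "vectorization" isomorphism $\lh \to \h \otimes \h$ that sends a matrix $A$ with entries $A_{ij}$ to the column vector $\ket{A} \coloneqq \sum_{i,j} A_{ij} \ket{s_i}\ket{s_j}$; equivalently, $\ket{\, \ketbra{s_i}{s_j}\,} = \ket{s_i}\ket{s_j}$, extended linearly. This map is a linear bijection, so in particular $A \neq 0$ iff $\ket{A} \neq 0$, which takes care of the non-degeneracy hypothesis. The whole content of the lemma is then the single algebraic identity
\[
    \ket{\,B A C^{\mathsf{T}}\,} = (B \otimes C)\,\ket{A}
    \qquad\text{for all } A, B, C \in \lh,
\]
which I would prove first, by checking it on the basis $A = \ketbra{s_i}{s_j}$ (where $B A C^{\mathsf{T}} = B\ketbra{s_i}{s_j}C^{\mathsf{T}}$ has $\ket{\cdot}$ equal to $(B\ket{s_i}) \otimes (C\ket{s_j})$) and invoking linearity in $A$. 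This is the routine computation and I would not belabor it.

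With that identity in hand, I would apply it with $B = E_k$ and $C = E_k^{*}$, so that $C^{\mathsf{T}} = (E_k^{*})^{\mathsf{T}} = E_k^{\dagger}$, giving $\ket{E_k A E_k^{\dagger}} = (E_k \otimes E_k^{*})\ket{A}$. Summing over $k$ and using linearity of $A \mapsto \ket{A}$ yields
\[
    \ket{\e(A)} \;=\; \Bigl\lvert \sum_k E_k A E_k^{\dagger} \Bigr\rangle \;=\; \sum_k (E_k \otimes E_k^{*})\ket{A} \;=\; \m_{\e}\ket{A}.
\]
Thus $\ket{\cdot}$ conjugates the action of $\e$ on $\lh$ into the action of $\m_{\e}$ on $\h \otimes \h$. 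I should also note that this conclusion is independent of the particular Kraus decomposition chosen for $\e$, since the left-hand side $\ket{\e(A)}$ does not reference the $E_k$; alternatively one observes that $\m_{\e}$ is itself decomposition-independent, which is implicit in the paper's notation $\m_{\e}$.

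Finally, from $\ket{\e(A)} = \m_{\e}\ket{A}$ and the injectivity of $A \mapsto \ket{A}$, the chain $\e(A) = \lambda A \iff \ket{\e(A)} = \ket{\lambda A} \iff \m_{\e}\ket{A} = \lambda \ket{A}$ is immediate, and since $A \neq 0 \iff \ket{A} \neq 0$, eigenvectors correspond to eigenvectors with the same eigenvalue. I do not anticipate a real obstacle here; the only thing to be careful about is the placement of transpose versus conjugate — making sure that the $E_k^{*}$ appearing in $\m_{\e} = \sum_k E_k \otimes E_k^{*}$ is exactly what the vectorization identity produces from the $E_k^{\dagger}$ on the right of $E_k A E_k^{\dagger}$ — and the (harmless) choice of convention in defining $\ket{A}$, which must be kept consistent between the statement and the proof.
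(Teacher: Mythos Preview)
Your argument is correct and is essentially the paper's approach: both proofs rest on the intertwining identity between $\e$ and $\m_{\e}$ under vectorization, you via the forward formula $\ket{\e(A)} = \m_{\e}\ket{A}$ obtained from $\ket{BAC^{\mathsf{T}}} = (B\otimes C)\ket{A}$, the paper via the equivalent partial-trace (devectorization) formula $\e(A) = \tr_{2}\bigl(\m_{\e}(A\otimes I)\ket{\Omega}\bra{\Omega}\bigr)$. Your version is in fact more explicit about why the identity holds and about the role of injectivity of $A\mapsto\ket{A}$; the paper simply states its formula and declares the equivalence ``readily'' proved.
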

In this context, $\ket{A} \in \h\otimes \h$ represents the vectorization of $A$, denoted by $\ket{A} \coloneqq (A \otimes I) \ket{\Omega}$. Here, $\ket{\Omega}$ denotes the (unnormalized) maximally entangled state on $\h \otimes \h$~\cite{nielsen2010quantum}. Assuming $\h = \sspan\setnocond{\ket{s_0}, \dotsc, \ket{s_{d-1}}}$, the maximally entangled state can be expressed as $\ket{\Omega} = \sum_{k=0}^{d-1} \ket{s_k} \ket{s_k}$. In particular, for a quantum state $\rho \in \dh$, we write $\ket{\rho} = (\rho \otimes I) \ket{\Omega}$. 

Now, our attention can be directed towards approximately model checking the QMC $\g$ that is periodically stable. To achieve this with a desired level of accuracy $\epsilon$, we can adopt the following approach. First and foremost, we need to identify the states of the chain that are periodically stable as defined in Definition~\ref{def:e_{p}eriodicstable}. After a sufficient number of steps, any state on the trajectory $\trjst(\g)$ will be close to one of the periodically stable states. This approach can also be applied to the labeled trajectory $L(\trjst(\g))$. By doing so, we can obtain an $\omega$-regular language. Following that, we can utilize the standard B\"uchi automata approach of model checking $\omega$-regular languages to analyze the obtained language. Therefore, in the following subsections, we will outline the step-by-step process for handling this.
\subsection{Periodically Stable States}\label{sec:Periodically_Stable_States}
Given a periodically stable QMC $(\h,\e,\rho_0)$, we can obtain the periodically stable states by employing a specific super-operator called the \emph{stabilizer} of $\g$, denoted by $\e_{\phi}$. This stabilizer is generated by the super-operator $\e$.

\begin{lemma}\label{Lem:stable_states}
    If a QMC $\g = (\h, \e, \rho_{0})$ is periodically stable with period $p(\g)$, then the set $\{\e_{\phi}(\e^{k}(\rho_{0}))\}_{0 \leq k < p(\g)}$ consists of the periodically stable states of $\g$. The computational complexity of obtaining such a set is $\bigO{d^{8}}$, where $d = \dim(\h)$.
\end{lemma}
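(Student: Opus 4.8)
The plan is to construct the stabilizer $\e_{\phi}$ explicitly as a ``Cesàro-type'' average of the powers of $\e^{p(\g)}$ and then show that applying it to each $\e^{k}(\rho_{0})$ picks out exactly the limit $\lim_{n\to\infty}\e^{np(\g)}(\e^{k}(\rho_{0}))$. Concretely, since $\e$ is a (completely positive, trace-preserving) super-operator, its matrix representation $\m_{\e}$ has spectral radius $1$, all eigenvalues on the unit circle are semisimple (no nontrivial Jordan blocks — this is the standard fact for trace-preserving positive maps, and it is what makes the partial limits below exist), and — because $\g$ is periodically stable with period $\theta = p(\g)$ — the peripheral spectrum of $\m_{\e^{\theta}} = \m_{\e}^{\theta}$ consists only of the eigenvalue $1$ together with eigenvalues strictly inside the unit disk. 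First I would invoke Lemma~\ref{Lem:eigenvalue} to transfer the whole discussion to $\m_{\e}$ acting on $\h\otimes\h$, where ordinary Jordan/spectral-projection machinery is available.

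The key steps, in order. (1) Define $\e_{\phi}$ to be the super-operator whose matrix representation is the spectral projection $\m_{\phi}$ of $\m_{\e}^{\theta}$ onto the eigenspace for eigenvalue $1$, equivalently $\m_{\phi} = \lim_{N\to\infty}\frac{1}{N}\sum_{n=0}^{N-1}\m_{\e}^{n\theta}$; one must check this is a genuine super-operator (completely positive, trace-preserving), which follows because it is a limit of convex combinations of the CPTP maps $\e^{n\theta}$ and CPTP maps form a closed convex set. (2) Show $\lim_{n\to\infty}\m_{\e}^{n\theta} = \m_{\phi}$ as operators on $\h\otimes\h$: decompose $\h\otimes\h$ into the eigenvalue-$1$ eigenspace of $\m_{\e}^{\theta}$ and its complementary invariant subspace on which $\m_{\e}^{\theta}$ has spectral radius $<1$; on the first $\m_{\e}^{\theta}$ acts as the identity, on the second its powers vanish. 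This is exactly where periodic stability is used — without it the complement would carry further unimodular eigenvalues $e^{i2\pi\psi}$ and the limit would fail, matching Example~\ref{ex:u}. (3) Conclude that for every $\rho\in\dh$, $\lim_{n\to\infty}\e^{n\theta}(\rho) = \e_{\phi}(\rho)$ by applying the operator identity of step (2) to $\ket{\rho}$ and translating back via Lemma~\ref{Lem:eigenvalue}. Applying this with $\rho = \e^{k}(\rho_{0})$ for $0\le k<p(\g)$ gives that $\{\e_{\phi}(\e^{k}(\rho_{0}))\}_{0\le k<p(\g)}$ are precisely the periodically stable states as named in Definition~\ref{def:e_{p}eriodicstable}. (4) For the complexity bound: $\m_{\e}$ is a $d^{2}\times d^{2}$ matrix, so forming it from the Kraus operators, computing its Jordan/spectral decomposition (or equivalently the projection $\m_{\phi}$), and then performing the $p(\g)-1 \le d^{2}$ matrix–vector applications each cost $\bigO{(d^{2})^{3}} = \bigO{d^{6}}$ up to lower-order terms; accounting for the symbolic/eigenvalue bookkeeping already charged in Proposition~\ref{Prop:periodicity} absorbs everything into $\bigO{d^{8}}$.

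I expect the main obstacle to be step~(2) together with the justification that $\m_{\phi}$ is itself realizable as a super-operator. The spectral-radius-$<1$ claim on the complementary subspace is not automatic: it requires knowing that the only peripheral eigenvalue of $\m_{\e}^{\theta}$ is $1$ (a consequence of the definition of $p(\g)$ as the period) and that $\m_{\e}$ has no Jordan blocks on the unit circle (the semisimplicity of the peripheral spectrum of trace-preserving positive maps); I would cite the relevant structure theory for quantum channels rather than reprove it. Establishing that $\e_{\phi}$ is CPTP — so that $\e_{\phi}(\e^{k}(\rho_0))$ really is a density operator and hence a legitimate ``state'' — is then a soft closedness argument, but it is the step most easily glossed over and worth stating carefully.
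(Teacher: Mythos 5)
Your overall strategy---transfer everything to the matrix representation $\m_{\e}$ via Lemma~\ref{Lem:eigenvalue}, isolate the part of the spectrum that survives in the limit, and identify the periodically stable states as $\e_{\phi}(\e^{k}(\rho_{0}))$---is the same as the paper's, and your Ces\`aro-average argument that the stabilizer is CPTP is a clean alternative to the paper's appeal to Wolf's Proposition~6.3. However, step~(2) contains a genuine error. Periodic stability of $\g=(\h,\e,\rho_{0})$ is a property of the \emph{pair} $(\e,\rho_{0})$, not of $\e$ alone, so it does \emph{not} imply that the peripheral spectrum of $\m_{\e}^{\theta}$ reduces to $\setnocond{1}$, and the operator limit $\lim_{n\to\infty}\m_{\e}^{n\theta}$ need not exist on all of $\h\otimes\h$. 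Example~\ref{ex:u} with $\rho_{01}=\rho_{10}=0$ is precisely a counterexample: that QMC is periodically stable (with $\theta=1$), yet $\m_{\e_{U}}$ still has the eigenvalues $e^{\pm i2\pi\psi}$ with $\psi$ irrational, so $\m_{\e_{U}}^{n}$ does not converge as an operator, and your step-(3) claim that $\lim_{n}\e^{n\theta}(\rho)=\e_{\phi}(\rho)$ for \emph{every} $\rho\in\dh$ fails for any $\rho$ with nonzero off-diagonal entries. What periodic stability actually gives (via Proposition~\ref{prop:periodicstable} and Corollary~\ref{cor:pstable}) is that the expansion of $\ket{\rho_{0}}$ in the Jordan basis of $\m_{\e}$ has nonzero coefficients only on (generalized) eigenvectors with $|\lambda|<1$ or with $\lambda^{\theta}=1$; since $\m_{\e}^{k}$ preserves this property, the limit $\lim_{n}\m_{\e}^{n\theta}\ket{\e^{k}(\rho_{0})}$ exists and equals $\m_{\e_{\phi}}\ket{\e^{k}(\rho_{0})}$ for each $0\le k<\theta$. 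Restricting your step~(2) to these particular vectors repairs the argument and brings it into line with the paper's proof.

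Two smaller remarks. First, your $\e_{\phi}$ (the spectral projection onto $\ker(\m_{\e}^{\theta}-I)$) differs as an operator from the paper's (the projection onto \emph{all} peripheral eigenspaces of $\m_{\e}$); under periodic stability they agree on the orbit of $\ket{\rho_{0}}$, so either choice proves the lemma, but you should not silently identify the two. Second, your bound $p(\g)-1\le d^{2}$ is unjustified: $p(\g)$ is the least common multiple of the orders of the peripheral roots of unity contributing to $\ket{\rho_{0}}$ and can exceed $d^{2}$. This does not affect the $\bigO{d^{8}}$ cost of constructing $\e_{\phi}$ itself, which is dominated by the Jordan decomposition of the $d^{2}$-by-$d^{2}$ matrix $\m_{\e}$, exactly as in the paper.
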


The super-operator $\e_{\phi}$ is constructed from $\e$ by retaining the eigenvectors corresponding to eigenvalues with a magnitude of one, which do not vanish in the evolution $\e^n(\rho)$ as $n$ tends to infinity. Specifically, let $\m_{\e}$ be the matrix representation of $\e$ with Jordan decomposition $\m_{\e} = SJS^{-1}$, where $J = \bigoplus_{k=0}^{K} J_{k}(\lambda_{k}) = \bigoplus_{k=0}^{K} \lambda_{k} P_{k} + N_{k}$. Here, $\lambda_{k}$ represents the eigenvalues of $\m_{\e}$, $P_{k}$ is a projector onto the corresponding (generalized) eigenvector space, and $N_{k}$ is the corresponding nilpotent part. Furthermore, according to~\cite[Proposition 6.2]{wolf2012quantum}, the geometric multiplicity of any $\lambda_{k}$ with a magnitude of one (i.e., $|\lambda_{k}|=1$) is equal to its algebraic multiplicity, i.e., $N_{k} = 0$.
Then we define $J_{\phi} \coloneqq \bigoplus_{k : |\lambda_{k}| = 1} P_{k}$ as the projector onto the eigenspace corresponding to eigenvalues with magnitude one. By~\cite[Proposition 6.3]{wolf2012quantum}, it is confirmed that $\m_{\e_{\phi}}=SJ_{\phi} S^{-1}$ is indeed the matrix representation of some super-operator $\e_{\phi}$.

\subsection{Neighborhood of Quantum States}
Now we proceed to introduce the concepts of (symbolic) neighborhoods for (sequences of) quantum states using the labeling function $\lfst$.

\begin{definition}\label{def:neighborhood_state}
    Let $\rho \in \dh$ be a quantum state and $\epsilon > 0$. The (symbolic) $\epsilon$-neighborhood $\epsneigh(\rho)$ of $\rho$ is the subset of $2^{\ap}$ defined as
    \[
        \epsneigh(\rho) \coloneqq \setcond{\lfst(\rho')}{\rho' \in \dh, \norm{\rho - \rho'} < \epsilon},
    \]
    where $\norm{A} \coloneqq \sqrt{\tr(A^\dagger A)}$ represents the Schatten 2-norm (also known as the Hilbert–Schmidt norm) for the linear operator $A \in \lh$.
\end{definition}

Now we show that, after a certain number of steps, the symbols $L(\e^{n}(\rho_0))$ will be enclosed within the $\epsilon$-neighborhood of one of the periodically stable states.

\begin{lemma}
\label{mainlemmaQMC}
    Consider a periodically stable QMC $\g = (\h, \e, \rho_{0})$ with period $p(\g)$. Let $\eta_{k} = \e_{\phi}(\e^{k}(\rho_{0}))$, for each $0 \leq k < p(\g)$, as the periodically stable states of $\g$. 
    Then for any $\epsilon > 0$, there exists an integer $K^{\epsilon} > 0$ such that for any $n \geq K^{\epsilon}$,
    \[
        \lfst(\e^{n}(\rho_{0})) \in \epsneigh(\eta_{n \modulo p(\g)}).
    \]
    Moreover, the time complexity of computing $K^{\epsilon}$ is in $\bigO{d^{8}}$, where $d = \dim(\h)$.
\end{lemma}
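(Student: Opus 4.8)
\textbf{Proof proposal for Lemma~\ref{mainlemmaQMC}.}
The plan is to reduce the claim to a convergence statement about the evolution $\e^{n}(\rho_0)$ along each residue class modulo $p(\g)$, and then transfer that convergence through the labeling function using the definition of the symbolic $\epsilon$-neighborhood. First I would recall from the construction in Section~\ref{sec:Periodically_Stable_States} that $\m_{\e}$ has Jordan form $SJS^{-1}$ with $J=\bigoplus_k J_k(\lambda_k)$, and that the super-operator is trace non-increasing so that $|\lambda_k|\leq 1$ for all $k$; moreover by~\cite[Proposition~6.2]{wolf2012quantum} every eigenvalue on the unit circle is semisimple ($N_k=0$). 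Writing $\ket{\rho_0}$ for the vectorization of $\rho_0$ and using Lemma~\ref{Lem:eigenvalue}, the quantity $\ket{\e^{n}(\rho_0)} = \m_{\e}^{n}\ket{\rho_0}$ decomposes into three parts: the contribution of eigenvalues with $|\lambda_k|<1$, which decays geometrically; the contribution of eigenvalues equal to a root of unity $e^{2\pi i q_k}$ with $q_k$ rational, which is exactly periodic with a period dividing $p(\g)$; and — since $\g$ is assumed periodically stable — there is \emph{no} contribution from eigenvalues of modulus one with irrational phase, because (as the excerpt explains before Proposition~\ref{Prop:periodicity}) periodic stability forces $\ket{\rho_0}$ to have zero component in those eigendirections. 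Hence $\m_{\e_{\phi}}^{n}\ket{\rho_0}$ is $p(\g)$-periodic in $n$ and equals $\ket{\eta_{n\bmod p(\g)}}$ for the states $\eta_k=\e_\phi(\e^k(\rho_0))$ defined in the statement.

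Next I would make the convergence quantitative. From the Jordan decomposition, for $n\geq 0$ we have the bound
\[
    \norm{\e^{n}(\rho_0) - \eta_{n\bmod p(\g)}} = \norm{(\m_{\e}^{n} - \m_{\e_{\phi}}^{n})\ket{\rho_0}} \leq C\, n^{r}\, \mu^{n},
\]
where $\mu = \max\setcond{|\lambda_k|}{|\lambda_k|<1} < 1$, $r$ is the maximal size of a Jordan block among the strictly-contracting eigenvalues, and $C$ is an explicit constant determined by $S$, $S^{-1}$, and $\norm{\ket{\rho_0}}$; here I use that the Schatten 2-norm of $A$ equals the Euclidean norm of $\ket{A}$. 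Since $n^{r}\mu^{n}\to 0$, given any $\epsilon>0$ one can compute an explicit threshold $K^{\epsilon}$ (for instance by solving $C n^{r}\mu^{n}<\epsilon$, which reduces to a logarithmic inequality) such that $\norm{\e^{n}(\rho_0) - \eta_{n\bmod p(\g)}} < \epsilon$ for all $n\geq K^{\epsilon}$. Then, taking $\rho' = \e^{n}(\rho_0)$ in Definition~\ref{def:neighborhood_state} with reference state $\rho = \eta_{n\bmod p(\g)}$ immediately gives $\lfst(\e^{n}(\rho_0)) \in \epsneigh(\eta_{n\bmod p(\g)})$, which is the desired conclusion.

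For the complexity bound: computing the Jordan decomposition of the $d^2\times d^2$ matrix $\m_{\e}$, extracting the eigenvalues, separating those of modulus one from the strictly contracting ones, and forming the constants $C$, $\mu$, $r$ all cost $\bigO{(d^2)^3}=\bigO{d^{6}}$ for the linear-algebra steps; the $\bigO{d^{8}}$ figure matches Proposition~\ref{Prop:periodicity} and Lemma~\ref{Lem:stable_states} and presumably absorbs the cost of the symbolic checks on the algebraic numbers $\lambda_k$ (to decide rationality of phases), so I would simply invoke those earlier results and note that once $\e_\phi$ and the $\eta_k$ are in hand, solving the scalar inequality for $K^{\epsilon}$ is negligible. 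The main obstacle I anticipate is not the decay estimate itself but making the constant $C$ genuinely effective and the threshold $K^{\epsilon}$ genuinely computable: this requires care that all objects (the similarity $S$, the spectral projectors $P_k$, the partition of the spectrum into $|\lambda_k|=1$ versus $|\lambda_k|<1$, and the value $\mu$) are obtained exactly or with certified error bounds, rather than by floating-point eigensolvers. This is exactly where the symbolic-computation machinery alluded to after Proposition~\ref{Prop:periodicity} must be invoked, and it is what pushes the cost up to $\bigO{d^{8}}$.
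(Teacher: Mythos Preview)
Your proposal is correct and follows essentially the same route as the paper: split $\m_\e^n\ket{\rho_0}$ into its unit-modulus part (periodic with period $p(\g)$, by periodic stability there is no irrational-phase contribution) and its strictly contracting part, bound the latter by $C n^{r}\mu^{n}$, and solve the scalar inequality for $K^\epsilon$. Two small points are worth flagging. First, a notational slip: $\m_{\e_\phi}^n\ket{\rho_0}$ is constant in $n$ (since $J_\phi$ is idempotent); what you want is $\m_{\e_\phi}\m_\e^{\,n}\ket{\rho_0}$, or equivalently $\m_{\e_\psi}^{\,n}\ket{\rho_0}$ with $\e_\psi=\e\circ\e_\phi$, which is indeed $p(\g)$-periodic and equals $\ket{\eta_{n\bmod p(\g)}}$ --- the paper works with $\e_\psi$ explicitly. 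Second, the paper makes the constant $C$ concrete by invoking the Jordan condition number $\alpha(\e)=\inf_S\norm{S}\cdot\norm{S^{-1}}$ via~\cite[Theorem~8.23]{wolf2012quantum}, rather than leaving it as ``determined by $S,S^{-1},\norm{\ket{\rho_0}}$''; and it attributes the $\bigO{d^8}$ directly to Jordan decomposition of a $d^2\times d^2$ matrix (taking $\bigO{n^4}$ for an $n\times n$ matrix), not to the symbolic rationality checks you mention.
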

 With Lemma~\ref{mainlemmaQMC}, we can define the concept of the (symbolic) neighborhood of trajectories for periodically stable QMCs.

\begin{definition}
\label{def:trajneigh}
    Given a periodically stable QMC $\g= (\h, \e, \rho_{0})$ and $\epsilon > 0$, the (symbolic) $\epsilon$-neighborhood of the trajectory $\trjst(\g)$ of $\g$ is defined to be the language $\epsneigh(\trjst(\g))$ over $(2^{\ap})^{\omega}$ such that $\xi \in \epsneigh(\trjst(\g))$ if and only if
    \begin{itemize}
    \item 
        $\xi[n] = \lfst(\e^{n}(\rho_{0}))$ for all $0 \leq n \leq K^{\epsilon}-1$ and
    \item 
        $\xi[n] \in \epsneigh(\eta_{n \modulo p(\g)})$ for all $n \geq K^{\epsilon}$,
    \end{itemize}
    where the states $\setcond{\eta_{k}}{0 \leq k <  p(\g)}$ and $K^{\epsilon}$ are as given in Lemma~\ref{mainlemmaQMC}.
\end{definition}

\subsection{Approximate Verification of Quantum Markov Chains}
Using Definition~\ref{def:trajneigh}, we can formulate and address the problem of approximate model checking for QMCs against MLTL formulas in the following manner.

\begin{problem}
\label{prob:approx}
    Given a periodically stable QMC $\g = (\h, \e, \rho_{0})$, a labeling function $\lfst$, an MLTL formula $\vp$, and $\epsilon > 0$, decide whether
    \begin{enumerate}
    \item 
         \emph{$\g$ $\epsilon$-approximately satisfies $\vp$ from below}, denoted $\g \models_{\epsilon} \vp$; that is, whether $\epsneigh(\trjst(\g)) \cap \l_{\omega}(\vp) \neq \emptyset$; 
    \item 
         \emph{$\g$ $\epsilon$-approximately satisfies $\vp$ from above}, denoted $\g \models^{\epsilon} \vp$; that is, whether $\epsneigh(\trjst(\g)) \subseteq \l_{\omega}(\vp)$. 
    \end{enumerate}
\end{problem}

To justify that Problem~\ref{prob:approx} is indeed an approximate version of Problem~\ref{prob:quantumstate}, we first note that $\lfst(\trjst(\g)) \in \epsneigh(\trjst(\g))$. 
Then we have three cases:
\begin{enumerate}
\item 
    if $\g \not\models_{\epsilon} \vp$, then $\epsneigh(\trjst(\g)) \cap \l_{\omega}(\vp)= \emptyset$, and hence $\lfst(\trjst(\g)) \notin \l_{\omega}(\vp)$ ($\g \not\models \vp$);
\item 
    if $\g \models^{\epsilon} \vp$, then $\epsneigh(\trjst(\g)) \subseteq \l_{\omega}(\vp)$, and hence $\lfst(\trjst(\g)) \in \l_{\omega}(\vp)$ ($\g \models \vp$);
\item 
    if neither $\g \not\models_{\epsilon} \vp$ nor $\g \models^{\epsilon} \vp$, then whether or not $\g \models \vp$ is unknown.
\end{enumerate}
If we find ourselves in the third scenario (the unknown case), we have the option to decrease the value of $\epsilon$ by half and then repeat the approximate model-checking process described in the previous scenarios. We can continue this process until we reach either of the first two cases, which will provide us with either a negative or affirmative answer to Problem~\ref{prob:quantumstate}. In exceptional cases, diminishing $\epsilon$ may not lead to termination. To prevent this, a predetermined number of iterations can be set for reducing epsilon. Determining when the procedure terminates seems difficult, and we would like to leave it as future work.

Finally, to solve Problem~\ref{prob:approx}, we represent $\epsneigh(\trjst(\g))$ in Definition~\ref{def:trajneigh} as the $\omega$-regular expression
\[
    \epsneigh(\trjst(\g)) = \setnocond{\lfst(\rho_{0})} \cdot \setnocond{\lfst(\e(\rho_{0}))} \cdots \setnocond{\lfst(\e^{K^{\epsilon}-1}(\rho_{0}))} \cdot \left(\epsneigh(\zeta_{0}) \cdots \epsneigh(\zeta_{p(\g)-1})\right)^{\omega}
\]
where $\zeta_{k} = \eta_{(K^{\epsilon} + k) \modulo p(\g)}$, $0 \leq k < p(\g)$,
and for any two sets $X$ and $Y$, $X \cdot Y = \setcond{xy}{x \in X, y \in Y}$.
Thus $\epsneigh(\trjst(\g))$ is $\omega$-regular and standard techniques~\cite{baier2008principles,HandbookMC18} can be employed to check $\epsneigh(\trjst(\g)) \cap \l_{\omega}(\vp) = \emptyset$ and $\epsneigh(\trj(\g)) \subseteq \l_{\omega}(\vp)$. 

\begin{algorithm}[t]
	\caption{ModelCheck($\g,  \ap, L, \vp, \epsilon$)}
	\label{ModelCheck}
	\begin{algorithmic}[1]
		\REQUIRE A periodically stable QMC $\g = (\h, \e, \rho_{0})$ with Kraus operators $\setnocond{E_{k}}_{k}$, a finite set of (measurement-based) atomic propositions $\ap$,  a labeling function $\lfst$,  an MLTL formula $\vp$, and $\epsilon> 0$.
		\ENSURE \TRUE{}, \FALSE{}, or \textbf{unknown}, where \TRUE{} indicates $\g \models \vp$, \FALSE{} indicates $\g \not\models \vp$, and \textbf{unknown} stands for an inconclusive answer.
		\STATE Put $\m_{\e}=\sum_{k} E_{k} \otimes E_{k}^{*}$ 
		\label{algo:line:matrixrepresentation}
		\STATE Get $p(\g)$ and $M_{\e_{\phi}} = S J_{\phi} S^{-1}$ by the Jordan decomposition form $\m_\e=SJS^{-1}$%\COMMENT{Lemmas~\ref{mainlemmaQMC}}
		%\STATE Reset $K^{\epsilon}$ to be $K^{\epsilon} p(\g)$
		\STATE Put $\ket{\rho_{0}}=(\rho_{0}\otimes\ii)\ket{\Omega}$ 
		\FORALL{$k \in \setnocond{0, 1, \dotsc, p(\g) - 1}$}
		\STATE Set $\ket{\eta_{k}}$ to be $M_{\e_{\phi}}M^{k}_{\e}\ket{\rho_{0}}$
		\STATE Compute $\epsneigh(\eta_{k})$ by semi-definite programming
		\ENDFOR
		\label{algo:line:neighbourhood}
            \STATE{Get $K^\epsilon$ by solving some inequalities} \label{algo:line:trancation}
		\FORALL{$k \in \setnocond{0, 1, \dotsc, K^{\epsilon}-1}$}
		\label{algo:line:prefixbegin}
		\STATE Put $\rho_{k}=\e^k(\rho_{0})$ and compute $\lfst(\rho_{k})$
		\ENDFOR
		\label{algo:line:prefixend}
            \STATE Put $\zeta_{k} = \eta_{(K^{\epsilon} + k) \modulo p(\g)}$ for $0 \leq k < p(\g)$\label{algo:line:reorder}
		\STATE Let $\epsneigh(\trj(\g))$ be the $\omega$-regular language\label{algo:line:language}\\
		~\hfill$\setnocond{\lfst(\rho_{0})} \setnocond{\lfst(\rho_{1})} \cdots \setnocond{\lfst(\rho_{K^{\epsilon}-1})} \cdot (\epsneigh(\zeta_{0}) \epsneigh(\zeta_{1}) \cdots \epsneigh(\zeta_{p(\g)-1}))^{\omega}$\hfill~
		\STATE Construct the NBA $A_{\vp}$ for $\vp$ \COMMENT{standard construction}
		\label{algo:line:NBAformula}
		\STATE Construct the NBA $A_{\g}$ accepting $\epsneigh(\trj(\g))$\COMMENT{standard lasso-shaped construction}
		\label{algo:line:NBAlasso}
		\IF[standard B\"uchi automata operation]{$\l(A_{\g}) \cap \l(A_{\vp}) = \emptyset$}
		\label{algo:line:NBAintersectionEmptiness}
		\RETURN \FALSE
		\ELSIF[standard B\"uchi automata operation]{$\l(A_{\g}) \subseteq \l(A_{\vp})$}
		\label{algo:line:NBAinclusion}
		\RETURN \TRUE
		\ELSE
		\RETURN \textbf{unknown}
		\ENDIF
	\end{algorithmic}
\end{algorithm}

\begin{theorem} 
\label{thm:main}
The verification problem outlined in Problem~\ref{prob:approx} can be addressed using Algorithm~\ref{ModelCheck} within a time complexity of $\bigO{2^{\bigO{|\vp|}} \cdot (K^{\epsilon} + p(\g))+d^8}$.  Here, $d=\dim(\h)$, $|\vp|$ represents the size of MLTL formula $\vp$, and $p(\g)$ and $K^{\epsilon}$ are as given in Proposition~\ref{Prop:periodicity} and Lemma~\ref{mainlemmaQMC}, respectively.
\end{theorem}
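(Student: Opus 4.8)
\textbf{Proof plan for Theorem~\ref{thm:main}.}
The plan is to simply trace through Algorithm~\ref{ModelCheck} line by line and account for the cost of each step, collecting the dominant terms at the end. The correctness part of the statement is a direct consequence of the results already established: lines~\ref{algo:line:matrixrepresentation}--\ref{algo:line:neighbourhood} build exactly the stabilizer $\e_\phi$ and the periodically stable states $\eta_k = \e_\phi(\e^k(\rho_0))$ (Lemma~\ref{Lem:stable_states}, via the Jordan decomposition $\m_\e = SJS^{-1}$ and $J_\phi$, together with Lemma~\ref{Lem:eigenvalue} identifying eigenvectors of $\e$ with those of $\m_\e$); line~\ref{algo:line:trancation} produces the truncation bound $K^\epsilon$ of Lemma~\ref{mainlemmaQMC}; lines~\ref{algo:line:prefixbegin}--\ref{algo:line:language} assemble the $\omega$-regular expression for $\epsneigh(\trjst(\g))$ exactly as displayed just before the algorithm; and lines~\ref{algo:line:NBAformula}--\ref{algo:line:NBAinclusion} are the standard automata-theoretic test of emptiness of $\l(A_\g)\cap\l(A_\vp)$ and of inclusion $\l(A_\g)\subseteq\l(A_\vp)$, which by the case analysis preceding Problem~\ref{prob:approx} returns \FALSE{} exactly when $\g\not\models_\epsilon\vp$ and \TRUE{} exactly when $\g\models^\epsilon\vp$. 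So the bulk of the work is the complexity bookkeeping.

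For the running time I would group the steps into two regimes. The ``quantum'' part --- building $\m_\e$ on $\h\otimes\h$ (a $d^2$-by-$d^2$ matrix), computing its Jordan decomposition, extracting $p(\g)$ and $\m_{\e_\phi}=SJ_\phi S^{-1}$, forming $\ket{\rho_0}$ and the vectors $\ket{\eta_k}$, and solving the semidefinite programs that compute each $\epsneigh(\eta_k)$ --- all runs in $\bigO{d^8}$: this is exactly the complexity already charged to Proposition~\ref{Prop:periodicity} and Lemmas~\ref{Lem:stable_states} and~\ref{mainlemmaQMC}, since matrix operations on $d^2$-dimensional objects cost $\bigO{(d^2)^3}=\bigO{d^6}$ (or $\bigO{d^8}$ for the Jordan form and the symbolic eigenvalue checks), and there are at most $\bigO{d^2}$ values of $k < p(\g)$ to iterate over. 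The ``automata'' part is the classical LTL cost: the NBA $A_\vp$ has $2^{\bigO{|\vp|}}$ states, the lasso-shaped NBA $A_\g$ has $\bigO{K^\epsilon + p(\g)}$ states (one per letter of the length-$K^\epsilon$ prefix plus one block of $p(\g)$ letters in the loop), so the product automaton has $2^{\bigO{|\vp|}}\cdot\bigO{K^\epsilon+p(\g)}$ states and both the emptiness check and the inclusion check (the latter via complementing $A_\vp$, which is already folded into the $2^{\bigO{|\vp|}}$ factor, or more simply by checking emptiness of $\l(A_\g)\cap\l(A_{\neg\vp})$) run in time linear in that product size. Adding the two regimes gives the claimed $\bigO{2^{\bigO{|\vp|}}\cdot(K^\epsilon+p(\g)) + d^8}$.

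The one point that needs a little care --- and what I would flag as the main obstacle --- is justifying that the semidefinite-programming computation of the neighborhoods $\epsneigh(\eta_k)$ in Definition~\ref{def:neighborhood_state}, i.e.\ deciding for each candidate label set $b\subseteq\ap$ whether there is a $\rho'\in\dh$ with $\norm{\eta_k-\rho'}<\epsilon$ and $\lfst(\rho')=b$, fits inside the $\bigO{d^8}$ budget. Each such query is a feasibility SDP over the $d$-by-$d$ density matrix $\rho'$ with the constraints $\rho'\succeq 0$, $\tr(\rho')=1$, $\tr((\eta_k-\rho')^\dagger(\eta_k-\rho'))<\epsilon^2$, and for each atomic proposition $a=\pair{M}{\ii}$ either $\tr(M\rho')\in\ii$ or $\tr(M\rho')\notin\ii$ according to whether $a\in b$; this is solvable to sufficient precision in time polynomial in $d$ and $|\ap|$, and since $|\ap|$ is a fixed input and there are at most $2^{|\ap|}$ subsets $b$ and at most $p(\g)=\bigO{d^2}$ indices $k$, the total stays within $\bigO{d^8}$ once $|\ap|$ is treated as a constant. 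Similarly one checks that computing $K^\epsilon$ by ``solving some inequalities'' on line~\ref{algo:line:trancation} is the $\bigO{d^8}$ step already accounted for in Lemma~\ref{mainlemmaQMC} (it depends on the subdominant eigenvalue magnitudes of $\m_\e$ and the nilpotent block sizes). With those two observations in place, the cost of every line is either $\bigO{d^8}$ or bounded by the automata term, and summing yields the theorem.
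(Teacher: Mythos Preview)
Your proposal is correct and follows essentially the same structure as the paper's own proof: correctness is delegated to the line-by-line discussion of Algorithm~\ref{ModelCheck} together with Proposition~\ref{Prop:periodicity} and Lemmas~\ref{Lem:stable_states} and~\ref{mainlemmaQMC}, and the complexity is obtained by charging the Jordan-decomposition based computations to the $\bigO{d^8}$ term and the automata-theoretic checks (with the inclusion reduced to emptiness of $\l(A_\g)\cap\l(A_{\neg\vp})$) to the $\bigO{2^{\bigO{|\vp|}}\cdot(K^\epsilon+p(\g))}$ term. If anything, you are more explicit than the paper about the SDP feasibility subroutine for $\epsneigh(\eta_k)$; the paper simply absorbs this into the $\bigO{d^8}$ bound without the per-label-set analysis you sketch.
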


Algorithm~\ref{ModelCheck} summarizes our techniques proposed above to answer Problem~\ref{prob:approx}. 
Starting from lines~\ref{algo:line:matrixrepresentation} to~\ref{algo:line:neighbourhood}, we make use of the Jordan decomposition of the matrix representation $\m_{\e}$ of $\e$ to determine the period $p(\g)$ and the stabilizer $\e_{\phi}$. These computations allow us to obtain the periodically stable states $\setnocond{\eta_{k}}_{k=0}^{p(\g)-1}$ of $\g$ and their corresponding symbolic $\epsilon$-neighborhood $\setnocond{\epsneigh{(\eta_{k})}}_{k=0}^{p(\g)-1}$ based on the given approximation level $\epsilon$. The steps involved in these computations utilize Proposition~\ref{Prop:periodicity} and Lemma~\ref{Lem:stable_states}. Afterwards, in line~\ref{algo:line:trancation}, we determine the truncation number $K^{\epsilon}$ using Lemma~\ref{mainlemmaQMC}. Subsequently, we compute $\setnocond{\lfst(\rho_{k})}_{k=0}^{K^{\epsilon}-1}$, which represents the symbols in $\ap$ of the first $K^\epsilon$ quantum states in the trajectory $\sigma(\g)$, from lines~\ref{algo:line:prefixbegin} to~\ref{algo:line:prefixend}. Finally, in lines~\ref{algo:line:reorder} and~\ref{algo:line:language}, we obtain an $\omega$-regular language $\epsneigh(\trj(\g))$ that represents the symbolic neighborhoods of the evolution $\trj(\g)$.
The B\"uchi automaton $A_{\vp}$ for the MLTL formula $\vp$ is constructed at line~\ref{algo:line:NBAformula} by means of a standard construction for an LTL formula $\vp$ (see, e.g.,~\cite{HandbookMC18}) while the B\"uchi automaton $A_{\g}$ at line~\ref{algo:line:NBAlasso} is obtained by an ordinary lasso-shaped construction: 
it is enough to insert a new state between each letter, make the state joining the stem and the lasso part accepting, and use the accepting state as the target of the last action in the lasso.
The two operations on B\"uchi automata at lines~\ref{algo:line:NBAintersectionEmptiness} and~\ref{algo:line:NBAinclusion} are standard:
intersection and emptiness reduce to automata product and strongly connected components decomposition, respectively, which require quadratic time (cf.~\cite{HandbookMC18}). 
Language inclusion, however, in general, requires exponential time and is PSPACE-complete (cf.~\cite{HandbookMC18});
in our case, however, we can remain in quadratic time by replacing the check $\l(A_{\g}) \subseteq \l(A_{\vp})$ with $\l(A_{\g}) \cap \l(A_{\neg\vp}) = \emptyset$, since constructing the B\"uchi automata $A_{\vp}$ and $A_{\neg\vp}$ requires the same asymptotic effort (cf.~\cite[Section~4.6.1]{HandbookMC18}).

\begin{remark}
The complexity of approximately verifying MCs against interval LTL formulas in the work of Agrawal et al.~\cite{agrawal2015approximate} is challenging to analyze and remains unknown. This is because the BSCC decomposition analysis, which is relied upon, is not an analytical method. Our model-checking algorithm, however, overcomes this by utilizing the Jordan normal form to create a linear-algebraic representation of the graph-theoretic (BSCC-based) model-checking algorithm mentioned in~\cite{agrawal2015approximate}. Consequently, our model-checking algorithm (Algorithm~\ref{ModelCheck}) can effectively verify MC models against interval LTL formulas. This capability stems from the ability of QMCs to emulate MCs, as illustrated in Example~\ref{Exa:modeling_MC}, and the greater expressiveness of our MLTL compared to interval LTL in~\cite{agrawal2015approximate}, as discussed in Section~\ref{Sec:logic}. Notably, MCs are inherently periodically stable, which extends to the modeled QMCs, making our model-checking algorithm suitable for this context. Additionally, while Agrawal et al.~\cite{agrawal2015approximate} did not offer a complexity analysis for their algorithm, our method establishes the first upper bound on the computational complexity of approximate model-checking for MCs against interval LTL specifications from their work. Specifically, when using our algorithm for model checking MCs, the complexity is reduced to  $\bigO{2^{\bigO{|\vp|}} \cdot (K^{\epsilon} + p(\g))+d^4}$, where $d$ represents the number of states in the MC.  This reduction is due to the fact that the complexity of computing the Jordan decomposition of the $d$-by-$d$ stochastic matrix $P$ in MCs is $\bigO{d^{4}}$. As a result, the time complexity of calculating $K^{\epsilon}$ and $p(\g)$ is also reduced to $\bigO{d^{4}}$.
\end{remark}

\section{Case Studies}
\label{sec:experiments}

To demonstrate the utility of the model-checking techniques proposed in this paper, we conducted case studies on quantum walks to investigate their temporal linear-time properties. We completed a prototype for implementing our model-checking algorithm and used it to automatically model check all MLTL formulas provided in Example~\ref{ex:qwc}. The prototype was built using Python for the quantum part to generate an $\omega$-regular language and Java for the classical part to model check the language against LTL formulas by calling Spot, a platform for LTL and $\omega$-automata manipulation~\cite{duret.22.cav}.

Before conducting the verification process, it is crucial to ensure the periodic stability of the QMC model for the quantum walk in Example~\ref{Exa:Quantum_walk}. By  Proposition~\ref{Prop:periodicity}, this can be achieved by computing the eigenvalues of the model and confirming that they only have 1 as the eigenvalue with a magnitude of one. Armed with this information, we can then employ Algorithm~\ref{ModelCheck} to verify the properties outlined in Example~\ref{ex:qwc}. The experimental results for these verifications can be found in Table~\ref{tb:experiments}. 

The first experiment in Table~\ref{tb:experiments} confirms the advantages of quantum walks over classical random walks, which was previously established by Ambainis et al. in~\cite{ambainis2001one}. The remaining two experiments aim to uncover new properties of quantum walks. Moreover, we also utilized Algorithm~\ref{ModelCheck} to verify these same properties for one-dimensional unbiased classical random walks with absorbing boundaries, which yielded contrasting results (\textbf{false} for the first experiment and \textbf{true} for the last two experiments in Table~\ref{tb:experiments}). This confirms that these properties are unique phenomena specific to quantum walks. %For additional experiments, please refer to Appendix~\ref{Appendix:experiment}.

 \begin{table}[tpb]
 	\caption{The experimental result for position number $d=20$ and different initial states $\rho_0=\ketbra{\psi_0}{\psi_0}$.}
 	\label{tb:experiments}
 	\centering
 	\renewcommand{\arraystretch}{1.2}
 	\setlength{\tabcolsep}{5pt}
 	\begin{tabular}{c|c|ccc}
 		\multirow{2}{11mm}{\centering  $\ket{\phi_{0}}$} & \multirow{2}{*}{Formula $\vp$} & \multicolumn{3}{c}{Parameter $\epsilon$} \\
 		& & 0.5 & 0.25 & 0.125 \\ 
 		\hline
       $\ket{s_{1}}\ket{R}$ & $\lozenge \square \pair{M_{s_0}}{[1/\sqrt{2}-0.1, 1/\sqrt{2}+0.1]}$ & \textbf{unknown} & \textbf{unknown} & \textbf{true} \\
       \hline
 		\multirow{2}{*}{$\ket{s_{10}}\ket{R}$} & $\square \pair{M_{s_{20}}}{[0,0.5)}$ & \textbf{unknown} & \textbf{unknown} & \textbf{false}\\
 		 & $\square(\pair{M_{s_{19}}}{(0.4,1]} \implies \pair{M_{s_1}}{(0.4,1]})$ & \textbf{unknown} & \textbf{unknown} & \textbf{false}\\
 	\end{tabular}
 \end{table}

\section{Conclusion}  
In this paper, we proposed a new quantum logic called measurement-based linear-time temporal logic (MLTL) to specify the quantitative properties of quantum Markov chains (QMCs). For model checking MLTL formulas, an algorithm was developed. The measurement-based atomic propositions of MLTL build upon the subspace-based quantum atomic propositions introduced by Birkhoff and von Neumann~\cite{birkhoff1936logic}. Furthermore, we demonstrated the practical applicability of our model-checking algorithm in quantum walks. This not only confirms the previously established advantages discovered by Ambainis et al.~\cite{ambainis2001one} of quantum walks over classical random walks, but also uncovers new phenomena that are unique to quantum walks. 

%As part of our future work, we plan to extend our algorithm to accommodate QMCs that are not periodically stable. As mentioned, these QMCs have no classical counterparts, necessitating the invention of new techniques for analyzing their long-term behavior.
As future work, we note that the quantum walk in Example~\ref{Exa:Quantum_walk} can also be written as a while-loop quantum program~\cite{ying2013verification}, and the absorbing probabilities are exactly the termination probabilities of the program. Indeed, any quantum program written in the \textsc{While} language can be modeled by a QMC~\cite{ying2016foundations}. So our model-checking approach can be naturally applied in the verification of program properties specified as MLTL formulas. Therefore, we plan to apply our model-checking algorithms to verify quantum programs. Moreover, it would be intriguing to broaden the application of the methods presented in this paper to verify the behavior of non-periodically stable QMCs.  One possible approach is that any non-periodically stable QMC will be close to a periodically stable one with arbitrary precision, as any irrational number (eigenvalue) will be close to a rational number (eigenvalue) with the same precision.

\bibliographystyle{unsrt}
\bibliography{note110418}
\clearpage
\section*{Appendix}

\appendix

\section{Visualization of the Evolution of Quantum Walks}\label{appendix:quantum_walk}
\begin{figure}
    \centering
    \resizebox{\linewidth}{!}{
    \includegraphics{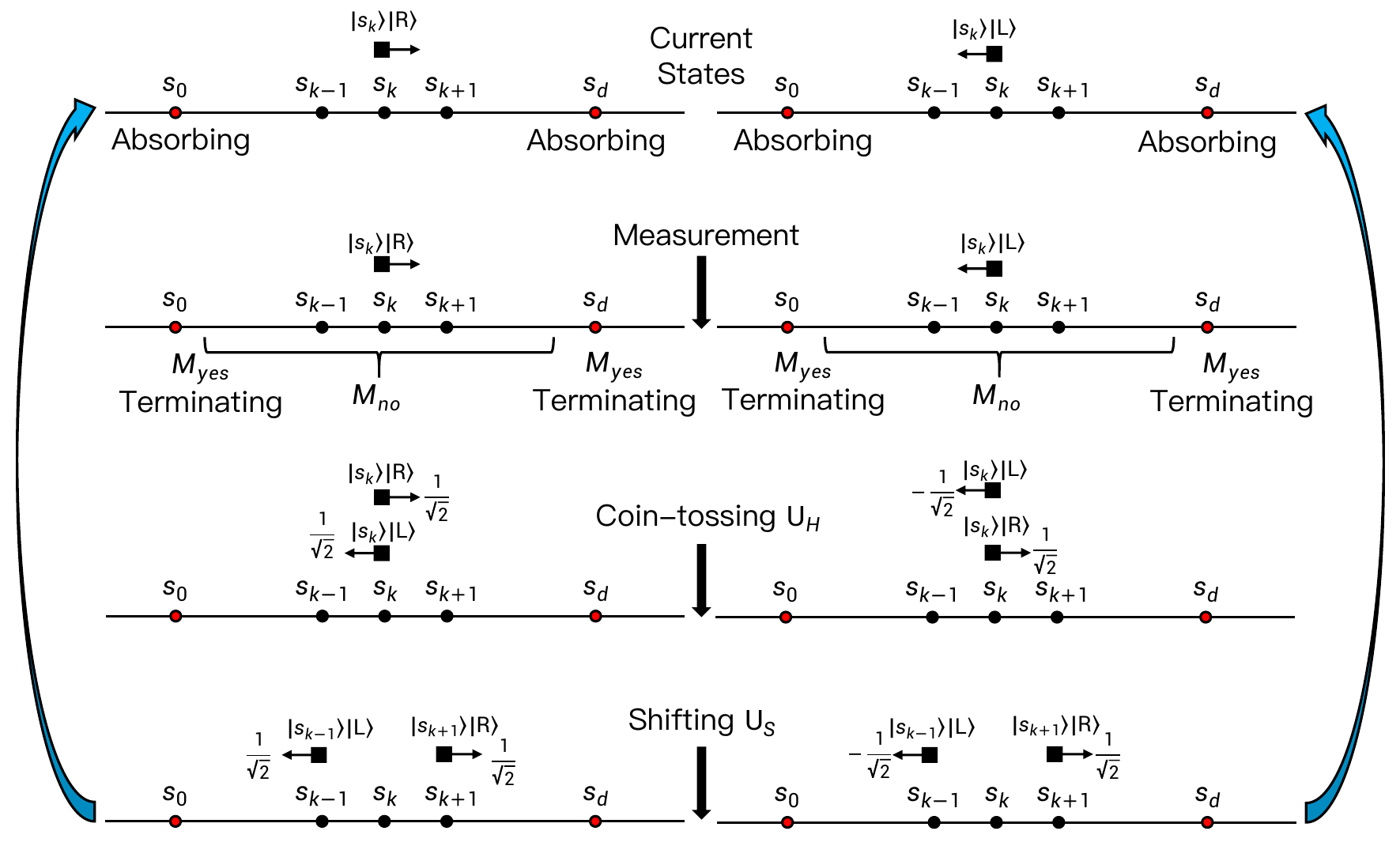}
    }
    \caption{The dynamics of one-dimensional quantum walk with boundaries in one step}
    \label{fig:quantum_walks}
\end{figure}
To gain a better understanding of the evolution of quantum walks, it is necessary to introduce the concept of \emph{amplitude distribution} from quantum physics. As stated in Section~\ref{Sec:preliminaries}, in a Hilbert space $\h = \sspan\setnocond{\ket{s_{0}}, \dotsc, \ket{s_{d}}}$, any quantum pure state $\ket{\psi}$ can be expressed as $\ket{\psi} = \sum_{k=0}^{d} a_{k}\ket{s_{k}}$, where $a_{k}$ is a complex number and referred to as the \emph{amplitude} of $\ket{\psi}$. 
% \at{From here to the end of the paragraph, changed $a_{1}$ to $a_{0}$ and $p_{1}$ to $p_{0}$, for uniformity}\jg{Yes, the index should start from 0 fro the uniformity}
The amplitude distribution $(a_{0}, \dotsc, a_{d})$ is a unit complex vector ($a_{0} a_{0}^{*} + \ldots + a_{d} a_{d}^{*} = 1$) formed by the set of amplitudes $\setnocond{a_{0}, \dotsc, a_{d}}$. It is important to note that the amplitude distribution can also be used to represent a probability distribution ($a_{0} a_{0}^{*}, \dotsc, a_{d} a_{d}^{*}$). Conversely, a probability distribution $(p_{0}, \dotsc, p_{d})$ can be expressed as an amplitude distribution $(\sqrt{p_{0}}, \dotsc, \sqrt{p_{d}})$.

To further illustrate the dynamic nature of the quantum walk, we can examine the evolution of the state at each step of the walk. This can be visualized in Fig.~\ref{fig:quantum_walks}, where the state at position $s_k$ is shown. The cases at other positions follow a similar pattern. Initially, the current state of the quantum system is represented as $\ket{s_k}\ket{R}$ (on the left side of Fig.~\ref{fig:quantum_walks}) or $\ket{s_{k}}\ket{L}$ (on the right side of Fig.~\ref{fig:quantum_walks}), indicating the state at position $s_{k}$ with the direction either right or left, respectively.

To begin the walk, we perform a measurement $\setnocond{M_{yes}, M_{no}}$ on the current state to determine if the position $s_{k}$ corresponds to the absorbing positions $s_0$ or $s_{d}$. If it is an absorbing position, the walk terminates. If not, the walk proceeds to the next two sub-steps, which involve moving the current position.

Next, a coin (Hadamard operator) $U_{H}$ is tossed to decide the direction of movement. This generates two possibilities with different directions and amplitude distributions, determined by the current state. Specifically, if the current state is $\ket{s_k}\ket{R}$ (on the left side of Fig.~\ref{fig:quantum_walks}), the amplitude distribution is $(\frac{1}{\sqrt{2}},\frac{1}{\sqrt{2}})$, representing the quantum walk moving to the right and left with a probability distribution of $(\frac{1}{2},\frac{1}{2})$. Similarly, for the state $\ket{s_{k}}\ket{L}$ (on the right side of Fig.~\ref{fig:quantum_walks}), the amplitude distribution is $(\frac{1}{\sqrt{2}},-\frac{1}{\sqrt{2}})$, indicating the quantum walk moving to the right and left with a probability distribution of $(\frac{1}{2},\frac{1}{2})$.

After determining the direction, the walk moves one position to the right or left based on the amplitude distribution by performing a shift operator $U_S$. This completes one step of the walk, and the process continues with the updated current state.

It is important to note that the elements of the amplitude distribution can be negative, such as $-\frac{1}{\sqrt{2}}$. As a result, when computing the total amplitude of the walk, subtractions occur, leading to some amplitudes decreasing in absolute value (probability) while others increase. This phenomenon, known as \emph{quantum interference}, results in the quantum walk exhibiting behavior that is significantly different from classical random walks. For example, the unbiased walk in Example~\ref{Exa:Quantum_walk} starting at the central position will exhibit a non-symmetric evolution. Our model-checking algorithm has identified and verified several properties related to this behavior.

\section{Proof of Proposition~\ref{Prop:periodicity}}
\label{sec:periodicstable}

In this section, we give an easily checkable characterization of the periodic stability of QMCs to complete the proof of Proposition~\ref{Prop:periodicity}.

Before proceeding, let us establish some mathematical notations from linear algebra that will be used later.

Firstly, it should be noted that any linear map $T$ on $\lh$ can have a maximum of $d^{2}$ (where $d = \dim(\h)$) complex eigenvalues $\lambda$ that satisfy $T(A) = \lambda A$ for some non-zero operator $A \in \lh$. 
We denote the \emph{spectrum} of $T$ as $\spec(T)$, which represents the set of all eigenvalues of $T$. 
The spectral radius of $T$ is defined as $\radius{T} = \max\setcond{|\lambda|}{\lambda \in \spec(T)}$. 
In particular, for any super-operator $\e$ on $\h$, it holds that $\radius{\e} = 1$~\cite[Proposition~6.1]{wolf2012quantum}. 
We denote the set of eigenvalues of $T$ with a magnitude of one as $\maxmag{T} = \setcond{\lambda \in \spec(T)}{|\lambda| = 1}.$
It is important to note that the calculation of $\spec(\e)$ can be simplified to the calculation of $\spec(\m_{\e})$, as shown in Lemma~\ref{Lem:eigenvalue}.  Furthermore,
the \emph{support}  of a quantum state $\rho$, denoted as $\supp{\rho}$, is a subspace of $\h$ spanned by the eigenvectors corresponding to the non-zero eigenvalues of $\rho$. If $\supp{\rho} = \h$, then $\rho$ is referred to as a \emph{full-rank} state.  $\rho$ is considered a \emph{stationary state} of $\e$ if and only if $\e(\rho) = \rho$. Additionally, a stationary state $\rho$ is classified as  \emph{maximal} if for any other stationary state $\sigma$ of $\e$, the support of $\sigma$ is a subspace of the support of $\rho$, i.e., $\supp{\sigma}\subseteq \supp{\rho}$. It is important to note that, for a given super-operator $\e$, the support of any maximal stationary state of $\e$ remains the same~\cite{guan2018decomposition}.

Let $(\h, \e, \rho_{0})$ be a QMC where $\m_{\e}$ represents the matrix $\e$ and $\m_{\e} = SJS^{-1}$ is its Jordan decomposition.
For each $0 \leq k \leq d^{2}-1$ where $d=\dim(\h)$, let $\ket{s_{k}}$ be the $k$-th column vector of $S$.
Since $S$ is invertible, the vectors $\ket{s_{k}}$ form a basis of the Hilbert space $\h \otimes \h$. Therefore, for any quantum state $\rho \in \dh$, its vectorization $\ket{\rho}\in\h\otimes\h$ can be uniquely represented as a linear combination of these vectors: 
$ \ket{\rho} = \sum_{k} a_{k} \ket{s_{k}}$. 
Let
\[
	\maxmag{\rho} = \setcond{\lambda \in \maxmag{\e}}{\text{$\m_{\e} \ket{s_{k}} = \lambda \ket{s_{k}}$ for some $k$ with $a_{k} \neq 0$}}
\]
be the set of eigenvalues of $\m_{\e}$ with a magnitude of one that contribute non-trivially to $\ket{\rho}$.  

We can now state the following proposition, which provides a method to determine whether a QMC $\g$ is periodically stable, as mentioned in Proposition~\ref{Prop:periodicity}. 
\begin{proposition}
\label{prop:periodicstable}
A QMC $(\h, \e, \rho_{0})$ is periodically stable if and only if $\maxmag{\rho_{0}}$ does not contain any element of the form $e^{i 2 \pi \psi}$, where $\psi$ is an irrational number. Furthermore, the complexity of checking periodic stability is $\bigO{d^8}$, where $d = \dim(\h)$.
\end{proposition}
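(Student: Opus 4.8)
The plan is to prove both directions via the explicit spectral decomposition of the iterates $\m_{\e}^{n}$, which I obtain from the Jordan form $\m_{\e} = SJS^{-1}$. Writing $\ket{\rho_0} = \sum_k a_k \ket{s_k}$ in the Jordan basis, I split the eigenvalues into three groups: those with $|\lambda_k| < 1$, which contribute terms $\lambda_k^n \to 0$ (the nilpotent blocks only multiply these by polynomials in $n$, so they still vanish); those with $|\lambda_k|=1$, where \cite[Proposition~6.2]{wolf2012quantum} guarantees $N_k = 0$ so the block acts as pure multiplication by $\lambda_k$; and there are no eigenvalues with $|\lambda_k|>1$ since $\radius{\e}=1$. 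Hence $\m_{\e}^{n\theta}\ket{\rho_0} = \sum_{k:\,|\lambda_k|=1} a_k \lambda_k^{n\theta}\ket{s_k} + o(1)$ as $n\to\infty$, for every fixed $\theta$. The whole question of periodic stability thus reduces to whether the finite sum over $\maxmag{\rho_0} = \setcond{\lambda_k}{|\lambda_k|=1,\ a_k\neq 0}$ converges along the subsequence $n\mapsto n\theta$.

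For the ``if'' direction: suppose $\maxmag{\rho_0}$ contains no $e^{i2\pi\psi}$ with $\psi$ irrational. Since each super-operator eigenvalue of magnitude one is an algebraic number, being a root of the characteristic polynomial of $\m_\e$, and since a root of unity is exactly an algebraic number on the unit circle whose argument is a rational multiple of $2\pi$, every $\lambda_k \in \maxmag{\rho_0}$ is a root of unity, say of order $q_k$. Taking $\theta = \lcm\setcond{q_k}{\lambda_k \in \maxmag{\rho_0}}$ makes $\lambda_k^{\theta} = 1$ for all relevant $k$, so $\m_{\e}^{n\theta}\ket{\rho_0} = \sum_{k:\,|\lambda_k|=1} a_k \ket{s_k} + o(1)$, which converges; devectorizing gives $\lim_n \e^{n\theta}(\rho_0) = \rho^*$. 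For the ``only if'' direction I argue contrapositively: if some $\lambda_{k_0} = e^{i2\pi\psi} \in \maxmag{\rho_0}$ with $\psi$ irrational, then for any candidate period $\theta$ the sequence $\lambda_{k_0}^{n\theta} = e^{i2\pi\psi\theta n}$ has $\psi\theta$ still irrational, so $\setcond{\lambda_{k_0}^{n\theta}}{n\in\naturals}$ is dense in the unit circle \cite{hardy1979introduction} and the coefficient of $\ket{s_{k_0}}$ in $\m_{\e}^{n\theta}\ket{\rho_0}$ does not converge; since the $\ket{s_k}$ are linearly independent and the low-modulus part vanishes, the whole sequence fails to converge, so $\g$ is not periodically stable. (One subtlety: if several Jordan basis vectors share the same eigenvalue $\lambda_{k_0}$, I group their coefficients and note the combined contribution is a fixed nonzero vector times $\lambda_{k_0}^{n\theta}$, which still diverges; $a_{k_0}\neq 0$ for at least one such $k$ by definition of $\maxmag{\rho_0}$.)

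For the complexity bound: $\m_\e$ is a $d^2\times d^2$ matrix, assembled in $\bigO{d^4}$ time (actually $\bigO{d^6}$ to form $m\le d^2$ Kronecker products $E_k\otimes E_k^*$, each of size $d^2\times d^2$). Computing its Jordan decomposition, hence its eigenvalues with algebraic-number certificates, costs polynomial time in $d^2$; the dominant operation is $\bigO{(d^2)^4} = \bigO{d^8}$. Deciding, for each magnitude-one eigenvalue with $a_k\neq 0$, whether it is a root of unity is a symbolic computation on its minimal polynomial (e.g.\ testing whether the minimal polynomial divides some cyclotomic polynomial, equivalently checking it is a product of cyclotomics), which is polynomial in the degree $\le d^2$ and does not dominate. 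Extracting the coefficients $a_k$ requires solving $S\ket{a} = \ket{\rho_0}$, again $\bigO{d^6}$. So the total is $\bigO{d^8}$, and the same decomposition immediately yields $p(\g) = \lcm$ of the orders of the roots of unity in $\maxmag{\rho_0}$ when $\g$ is stable, giving Proposition~\ref{Prop:periodicity}.

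The main obstacle I anticipate is the symbolic number-theoretic step: cleanly justifying that an algebraic number on the unit circle is either a root of unity or of the form $e^{i2\pi\psi}$ with $\psi$ irrational, and that the former is decidable in the claimed complexity. This needs a careful appeal to the fact that the eigenvalues are algebraic (roots of a characteristic polynomial with entries built from the Kraus operators, which in applications are themselves algebraic) and to an effective cyclotomic-factorization test; handling the case where the Kraus operators have merely real (possibly transcendental) entries would require restricting to algebraic inputs, which I would state as a standing assumption. The linear-algebra estimates and the density argument are routine by comparison.
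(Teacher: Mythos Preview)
Your argument is correct and more direct than the paper's. Both proofs start the same way---vectorize to $\m_\e^{n}\ket{\rho_0}$, expand in the Jordan basis of $\m_\e$, invoke \cite[Proposition~6.2]{wolf2012quantum} to trivialize the Jordan blocks on the peripheral spectrum, and discard the modulus-$<1$ part---but they diverge at the core step. You reduce immediately to the scalar question of whether each $\lambda_k^{n\theta}$ with $\lambda_k\in\maxmag{\rho_0}$ stabilizes, and settle it by the root-of-unity versus irrational-rotation dichotomy together with linear independence of the Jordan basis vectors. The paper instead first restricts to the faithful case by replacing $\e$ with $\p\circ\e$ (citing \cite[Lemma~2]{guan2017super}), then block-diagonalizes the Kraus operators as $E_i=\bigoplus_k E_{i,k}$ using \cite{ying2013reachability,Burgarth2013} so that $\m_\e=\bigoplus_{k,l}\m_{k,l}$, and finally invokes a structural description of each $\maxmag{\m_{k,l}}$ from \cite{guan2018structure} before running the same rational/irrational dichotomy block by block. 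Your route is self-contained and avoids three external structural citations; the paper's route exposes finer structure in the peripheral spectrum (the cyclic families $\{e^{i2\pi(r+\psi_{k,l})/N_{k,l}}\}_r$), which connects the proof to the BSCC-style decomposition theory that the paper is explicitly comparing itself against. Your complexity analysis matches the paper's (Jordan decomposition of the $d^2\times d^2$ matrix $\m_\e$ as the $\bigO{d^8}$ bottleneck), and your closing caveat about needing algebraic Kraus entries for the root-of-unity test to be effective is well taken---the paper leaves this implicit.
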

\begin{proof}
 %The proof is similar to that of Lemma~\ref{Lem:characterperiodicallystable}.
  First, note that for any $m > 0$ and $\rho \in \dh$, $\ket{\e^{m}(\rho)} = M^{m}_{\e} \ket{\rho}$. 
  Thus $(\h, \e, \rho_{0})$ is periodically stable if and only if there exists an integer $\theta > 0$ such that $\lim_{n \to \infty} \m_{\e}^{n \theta} \ket{\rho_{0}}$ exists. 
  Let $\ket{\rho_{0}} = \sum_{j} a_{j} \ket{s_{j}}$. 
  For any $0 \leq j \leq d^{2}-1$, if $\ket{s_j}$ is an (generalized) eigenvector of $\m_{\e}$ corresponding to an eigenvalue with magnitude strictly smaller than $1$, then $\lim_{n \to \infty} \m_{\e}^{n\theta} \ket{s_{j}} = 0$ for any $\theta$. 
  Thus we only need to care about $\ket{s_j}$ corresponding to eigenvalues with a magnitude of one. Following~\cite[Lemma 2]{guan2017super}, $\p \circ \e$ shares with $\e$ the same eigenvalues with a magnitude of one and the corresponding eigenvectors, where $\p(\rho) = P \rho P$ for all $\rho$, and $P$ is the projector onto the support of the maximal stationary state.
  Therefore, w.l.o.g., we assume that $\e$ has a full-rank stationary state. 
  This kind of $\e$ is called \emph{faithful} in~\cite{Albert2018}.

% \at{In the following part, should "$m$" be "$d$" instead? Otherwise, use a different symbol as "$m$" is used above as a given number of steps}\jg{"$m$" has been replaced by "$t$"}
  Furthermore, for faithful $\e$, the Kraus operators $\setnocond{E_{i}}$ admit a diagonal form with respect to an appropriate decomposition $\h = \oplus_{k=0}^{t-1} \h_{k}$~\cite{ying2013reachability,Burgarth2013}, where $\oplus$ denotes the \emph{direct sum} operation.
  To be specific, for all $i$,
  \[
    E_{i} = \oplus_{k=0}^{t-1} E_{i,k} = 
      \left[
        \begin{array}{cccc}
          E_{i,0} & & &\\
          & E_{i,1} & & \\
          & & \ddots & \\
          & & & E_{i,t-1} 
        \end{array}
      \right]
  \]
  where $E_{i,k}\in \l(\h_{k})$,
  so $\m_{\e}$ has the corresponding structure
  \[
    \m_{\e} = \bigoplus_{k,l} \m_{k,l}
  \]
  where $\m_{k,l} = \sum_{i} E_{i,k} \otimes E_{i,l}^{*}$. Furthermore, for any $\theta >0$ we have
  \begin{equation}
    \lim_{n \to \infty} \m_{\e}^{n\theta} \ket{\rho_{0}} = \bigoplus_{k,l} \lim_{n \to \infty} \m_{k,l}^{n\theta} \ket{\rho_{0,k,l}},
    \label{eq:tmp}
  \end{equation}
  where $\ket{\rho_{0,k,l}}$ is the restriction of $\ket{\rho_{0}}$ onto $\h_{k} \otimes \h_{l}$, i.e., $\ket{\rho_{0}} = \oplus_{k,l} \ket{\rho_{0,k,l}}$.
  Now it is easy to see that $\lim_{n \to \infty} \m_{\e}^{n\theta} \ket{\rho_{0}}$ exists if and only if for any $k$ and $l$, $\lim_{n \to \infty} \m_{k,l}^{n\theta} \ket{\rho_{0,k,l}}$ exists.

  To verify the existence of $\lim_{n \to \infty} \m_{k,l}^{n\theta} \ket{\rho_{0,k,l}}$, let 
  \[
    \ket{\rho_{0}} = \sum_{j} a_{j} \ket{s_{j}} = \bigoplus_{k,l} \sum_{j} a_{j} \ket{s_{j,k,l}},
  \]
  where $\ket{s_{j,k,l}}$ is the restriction of $\ket{s_{k}} $ onto $\h_{k} \otimes \h_{l}$. 
  Then $\ket{\rho_{0,k,l}} = \sum_{j} a_{j} \ket{s_{j,k,l}}$. Define 
  \[
    \maxmag{\rho_{0,k,l}} = \setcond{\lambda \in \maxmag{\m_{k,l}}}{\text{$\m_{k,l} \ket{s_{j,k,l}} = \lambda \ket{s_{j,k,l}}$ for some $j$ with $a_{j} \neq 0$}}.
  \]
  For any $0 \leq j \leq d^{2}-1$ and $0 \leq k,l \leq m-1$, we have two cases to consider:
  \begin{itemize}
  \item 
    if $\ket{s_{j,k,l}}$ is a (generalized) eigenvector of $\m_{k,l}$ corresponding to an eigenvalue with magnitude strictly smaller than $1$, then $\lim_{n \to \infty} \m_{\e}^{n\theta_{j}} \ket{s_{j,k,l}} = 0$ for any $\theta_{j}$; 
  \item 
    if $\ket{s_{j,k,l}}$ is an eigenvector of $\m_{k,l}$ corresponding to an eigenvalue with a magnitude of one, then $\m_{k,l}^{n} \ket{s_{j,k,l}} = e^{i 2 \pi \psi_{j,k,l} n} \ket{s_{j,k,l}}$ for some $\psi_{j,k,l}$. 
    Thus we have $\lim_{n \to \infty} \m_{\e}^{n\theta_{j}}\ket{s_{j,k,l}}$ exists for some $\theta_{j} > 0$ if and only if $\psi_{j,k,l}$ is rational and $\theta_{j} \psi_{j,k,l}$ is an integer~\cite{hardy1979introduction}.
  \end{itemize}

  Note that the matrices $\m_{k,l}$'s have the following spectral properties (cf.~\cite{guan2018structure}):
  for any $k$ and $l$, $\maxmag{\m_{k,l}} = \emptyset$ or $\maxmag{\m_{k,l}} = \setnocond{\exp(i 2 \pi (r + \psi_{k,l})/N_{k,l})}_{r=0}^{N_{k,l}-1}$, where $N_{k,l}$ is a positive integer and $\psi$ is a real number. 
  Thus $\lim_{n \to \infty} \m_{k,l}^{n\theta} \ket{\rho_{0,k,l}}$ exists if and only if $\maxmag{\rho_{0,k,l}}$ does not contain any element of the form $e^{i 2 \pi \psi}$ for some irrational number $\psi$. 
  We complete the proof by noting that $\maxmag{\rho} = \setcond{\lambda \in \maxmag{\rho_{0,k,l}}}{0 \leq k,l \leq m-1}$.
  %For any $k$ and $l$, let \[\maxmag{\rho_{0,k,l}} = \setcond{\lambda \in \maxmag{\e}{\text{$\m_{\e} \ket{s_{k,l}} = \lambda \ket{s_{k,l}}$ for some $k$ with $a_{k} \neq 0$}}
  %\]
  %\begin{itemize}
  % \item  if $\maxmag{\sum_{i} E_{i,k} \otimes E_{i,l}^{*}} = \emptyset$, then $\lim_{n \to \infty}(\sum_{i}E_{i,k}\otimes E_{i,l}^{*})^{n\theta}\ket{\rho_{0,k,l}}=0$;\\
  % \item Let $\kappa(\rho_{K,l})=$if $\kappa(\sum_{i}E_{i,k}\otimes E_{i,l}^{*})=\{e^{i\psi+\frac{i2\pi r}{N_{k,l}} }\}_{r=0}^{N_{k,l}-1}$, then $\lim_{n \to \infty}(\sum_{i}E_{i,k}\otimes E_{i,l}^{*})^{n\theta}\ket{\rho_{0,k,l}}=0$ exists for some $\theta$ if and only if $\psi$ is rational.
  %\end{itemize}
  \qed
\end{proof}
%
%With the above lemma and Lemma~\ref{Lem:characterperiodicallystable}, we can see that $\e$ is periodically stable if and only if $(\h,\e,\rho)$ is periodically stable for all $\rho\in \dh$. That is the statement of Lemma~\ref{lem_twoperiodicalstbale}.

%From Lemma~\ref{lem:periodicstable}, the periodicity of quantum Markov chains cannot be checked by any computer, as the number stored in it must be rational. 
%Fortunately, some interesting  practical quantum systems can always be modeled by periodical quantum Markov chains (see the example in Section~\ref{sec:applications}). 

It is important to highlight that the proof of Proposition~\ref{prop:periodicstable} offers a method to check the periodic stability of QMCs. The main computational cost lies in the Jordan decomposition $\m_{\e}=SJS^{-1}$ of $\m_{\e}$. It is widely recognized that the complexity of the Jordan decomposition for an $n$-by-$n$ matrix is $\bigO{n^4}$. Therefore, for a $d^2$-by-$d^2$ matrix $\m_{\e}$, the cost is $\bigO{d^8}$. Consequently, the aforementioned proof of Proposition~\ref{prop:periodicstable} concludes the proof of the first part of Proposition~\ref{Prop:periodicity}.

\begin{corollary}
\label{cor:pstable}
	Let $\g = (\h, \e, \rho_{0})$ be a periodically stable QMC with $d = \dim(\h)$. 
	\begin{itemize}
	\item 
		If we express $\maxmag{\rho_{0}}$ as
		\[
			\maxmag{\rho_{0}} = \setcond{e^{2 \pi i p_{k}/q_{k}}}{\text{$p_{k}$ and $q_{k}$ are coprime positive integers}}_{k},
		\]
		then $p(\g) = \lcm\setnocond{q_{k}}_{k}$,  the least common multiple of $\setnocond{q_{k}}_{k}$.
	\item	
		For any positive integer $\theta$, $\lim_{n \to \infty} \e^{n\theta}(\rho_{0})$ exists if and only if $p(\g)$ divides $\theta$.
	\end{itemize}
\end{corollary}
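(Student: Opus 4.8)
\textbf{Proof proposal for Corollary~\ref{cor:pstable}.}

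The plan is to build directly on the proof of Proposition~\ref{prop:periodicstable}, where it was shown that for a periodically stable QMC the only surviving contributions to $\lim_{n\to\infty}\e^{n\theta}(\rho_0)$ come from the (generalized) eigenvectors $\ket{s_j}$ of $\m_\e$ whose eigenvalue has magnitude one, and that for such an eigenvector $\m_\e^n\ket{s_j}=e^{i2\pi\psi_j n}\ket{s_j}$ with each $\psi_j$ rational (this rationality is exactly the periodic-stability hypothesis). Writing $\maxmag{\rho_0}=\{e^{2\pi i p_k/q_k}\}_k$ with $p_k,q_k$ coprime, the key elementary fact I would invoke is that for a fixed eigenvalue $e^{2\pi i p_k/q_k}$ of modulus one, the sequence $e^{2\pi i p_k n\theta/q_k}$ converges as $n\to\infty$ if and only if it is eventually constant, which (since $p_k,q_k$ are coprime) happens if and only if $q_k \mid \theta$. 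I would cite~\cite{hardy1979introduction} for this, as the proof of Proposition~\ref{prop:periodicstable} already does.

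First I would establish the second bullet: $\lim_{n\to\infty}\e^{n\theta}(\rho_0)$ exists iff, decomposing $\ket{\rho_0}=\sum_j a_j\ket{s_j}$, the limit $\lim_{n\to\infty} e^{i2\pi\psi_j n\theta}a_j\ket{s_j}$ exists for every $j$ with $a_j\neq 0$ and $|\lambda_j|=1$ (the other terms vanish). By the elementary fact above, this holds iff $q_k \mid \theta$ for every $k$, i.e.\ iff $\mathrm{lcm}\{q_k\}_k \mid \theta$. So the set of valid periods is exactly the set of multiples of $\mathrm{lcm}\{q_k\}_k$.

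Then the first bullet follows immediately: by Definition~\ref{def:e_{p}eriodicstable}, $p(\g)$ is the smallest positive integer $\theta$ for which the limit exists; by what we just showed, the positive integers with this property are precisely the positive multiples of $\mathrm{lcm}\{q_k\}_k$, whose least element is $\mathrm{lcm}\{q_k\}_k$ itself. Hence $p(\g)=\mathrm{lcm}\{q_k\}_k$, and ``$p(\g)$ divides $\theta$'' is the same condition as ``$\mathrm{lcm}\{q_k\}_k$ divides $\theta$'', which re-proves the second bullet in the stated form.

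I do not anticipate a serious obstacle here, since essentially all the analytic work (the reduction to modulus-one eigenvectors, the block-diagonal structure, the rationality of the relevant $\psi_j$) was already carried out in the proof of Proposition~\ref{prop:periodicstable}; the only care needed is the bookkeeping that relates the $\psi_j$ appearing there to the reduced fractions $p_k/q_k$ in $\maxmag{\rho_0}$, and the observation that ``eventually constant'' rather than merely ``convergent'' is what a convergent sequence of roots of unity must be. The mild subtlety worth spelling out is why coprimality of $p_k$ and $q_k$ makes $q_k\mid\theta$ both necessary and sufficient for $e^{2\pi i p_k n\theta/q_k}$ to stabilize, but this is a one-line number-theoretic argument.
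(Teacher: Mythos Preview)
Your proposal is correct and follows essentially the same route as the paper: the paper does not give a separate proof of Corollary~\ref{cor:pstable} but treats it as an immediate consequence of the analysis in the proof of Proposition~\ref{prop:periodicstable}, where it was already established that for an eigenvector with eigenvalue $e^{i2\pi\psi_{j,k,l}}$ the limit $\lim_{n\to\infty}\m_{\e}^{n\theta}\ket{s_{j,k,l}}$ exists iff $\theta\psi_{j,k,l}$ is an integer. Your reduction to the divisibility condition $q_k\mid\theta$ via coprimality, and the identification of $p(\g)$ as the least common multiple, is exactly the intended reading.
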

Corollary~\ref{cor:pstable} presents a technique for calculating $p(\g)$ by determining the Jordan decomposition of a $d^2$-by-$d^2$ matrix $\m_{\e}$ and then evaluating $\maxmag{\rho_{0}}$ using its definition. The time complexity of this method is $\bigO{d^{8}}$. This concludes the proof of the second part of Proposition~\ref{Prop:periodicity}.
%\begin{proof}
%	This is directly from the proof of Lemma~\ref{Lem:characterperiodicallystable}.
%\end{proof}

%Obviously, $p(\e)$ is different with $p(\rho)$. If $(\h,\e)$ is  periodically stable, then $p(\e)$ is a multiple of $p(\rho)$.

%We are now ready to prove Lemma~\ref{mainlemmaQMC}.
\section{Proof of Lemma~\ref{Lem:eigenvalue}}
\begin{proof}[of Lemma~\ref{Lem:eigenvalue}]
To begin, let us establish the transformation of $\e$ on $\h_1$ and its corresponding matrix representation on $\h_1\otimes \h_2$ as shown below:
\begin{equation}\label{Eq:partial_trace}
   \text{$\e(A) = \tr_{2}(\m_{\e}(A \otimes I) \ket{\Omega} \bra{\Omega})$, for any $A \in \mathcal{L}(\h_1)$.}
\end{equation}
Here, $\tr_{2}$ denotes the \emph{partial trace} operation on the second Hilbert space~\cite{nielsen2010quantum}. Mathematically, $\tr_{2}(\functiondot)$ is a linear map from $\l(\h_1 \otimes \h_2)$ to $\l(\h_1)$. Given that $\h = \sspan\setnocond{\ket{s_0}, \dotsc, \ket{s_{d-1}}}$, we can express $\tr_{2}(\functiondot)$ as follows:
\[\text{$\tr_{2}(\m) = \sum_{k} (I \otimes \bra{s_k}) \m (I \otimes \ket{s_k})$, for all $\m \in \l(\h_1 \otimes \h_2)$.}\]
Here, $I$ represents the identity operator on $\h_1$. By utilizing the transformation in Eq.~\eqref{Eq:partial_trace}, we can readily prove that $\e(A) = \lambda A$ if and only if $\m_{\e} \ket{A} = \lambda \ket{A}$, where $A \in \lh$ is not a zero matrix.
\qed
\end{proof}

\section{Proof of Lemma~\ref{Lem:stable_states}}
\begin{proof}[of Lemma~\ref{Lem:stable_states}]
  Let $\theta = p(\g)$.
  Based on the proof of Proposition~\ref{prop:periodicstable}, Corollary~\ref{cor:pstable}, and the construction of the stabilizer $\e_{\phi}$, we can conclude that \[\text{$\lim_{n \to \infty} \e^{n\theta+k}(\rho_{0})=\e_{\phi}(\e^{k}(\rho_{0}))$ for any integer $0 \leq k \leq \theta-1$}.\]	
 In other words, the set $\{\e_{\phi}(\e^{k}(\rho_{0}))\}_{0 \leq k < p(\g)}$ consists of the periodically stable states of $\g$. The process of obtaining this set has the same computational complexity as that of constructing $\e_{\phi}$, which is $\bigO{d^{8}}$  for the Jordan decomposition of the $d^2$-by-$d^2$ matrix $\m_{\e}$, as discussed in Subsection~\ref{sec:Periodically_Stable_States}.
 \qed
\end{proof}
%\lim_{n \to \infty} \e^{n}(\rho_{0})=\e_{\phi}(\e^{n \modulo \theta }(\rho_{0})).

\section{Proof of Lemma~\ref{mainlemmaQMC}}
To prove Lemma~\ref{mainlemmaQMC}, we need to characterize the $\epsilon$-neighborhoods of the quantum state $\rho \in \dh \subseteq \lh$ and the super-operator $\e \in \bbh$. Here, $\bbh$ represents the set of all linear operators from $\lh$ to $\lh$. To do this, we must introduce norms on $\lh$ and $\bbh$. The Schatten 2-norm (also known as the Hilbert-Schmidt norm) on $\lh$ has already been introduced in Definition~\ref{def:neighborhood_state} to define the $\epsilon$-neighborhood of $\rho$. Now, we introduce an additional norm on $\bbh$, and we choose to utilize the Schatten 2-norm on $\lh$ to induce an operator norm on $\bbh$. It is important to note that the results presented here are applicable to any other norm because all norms in a finite-dimensional Hilbert space are equivalent~\cite{Horn2013}.

\begin{definition}
    Given a Hilbert space $\h$, the operator norm $\opnorm{\functiondot}$ on $\bbh$ induced by the Schatten 2-norm $\norm{\functiondot}$ is defined as follows:
    \[
        \text{$\opnorm{T} \coloneqq \sup \setcond{\norm{T(A)}}{\text{$A \in \lh$ with $\norm{A} = 1$}}$ for any $T \in \bbh$.}
    \]
\end{definition}

Furthermore, for simplicity, we will refer to $\opnorm{\functiondot}$ as $\norm{\functiondot}$ unless there is potential confusion.

Based on the given norm and the fact that $\norm{A} = \sqrt{\tr(A^\dagger A)} = \sqrt{\braket{A}{A}}$, it can be easily shown that for any super-operator $\e$, 
\[
    \norm{\e} = \norm{\m_{\e}} = \max_{\lambda \in \spec(\m_{\e}^{\dag} \m_{\e})} \sqrt{\lambda}. 
\]
In other words, $\norm{\e}$ represents the maximum singular value of $\m_{\e}$, which is equal to 1 for the super-operator $\e$ (i.e., $\norm{\e}=1$). 
With this, for any super-operators $\e_{1}$, $\e_{2}$, and $\f$, we have
\begin{equation}
    \norm{(\e_{1} - \e_{2}) \circ \f} \leq \norm{\e_{1}-\e_{2}} \cdot \norm{\f} = \norm{\e_{1} - \e_{2}}.
    \label{eq:contractive}
\end{equation}
The inequality is a result of the sub-multiplicative property of the operator norm, which states that $\norm{T_{1} T_{2}} \leq \norm{T_1} \cdot \norm{T_2}$.

Furthermore, for any $\rho \in \dh$ and super-operators $\e_{1}$ and $\e_{2}$, we have
\begin{equation}
    \norm{\e_{1}(\rho) - \e_{2}(\rho)} \leq \norm{\e_{1} - \e_{2}} \cdot \norm{\rho} \leq \norm{\e_{1} - \e_{2}}.
    \label{eq:operatornorm}
\end{equation}
The second inequality follows from the fact that $\norm{\rho} \leq 1$.

To prove Lemma~\ref{mainlemmaQMC}, an essential result of the asymptotic property of $\e$ characterized by $\e_{\phi}$ from~\cite{wolf2012quantum} is also needed. This result is based on the \emph{Jordan condition number}. 

It is important to note that the Jordan decomposition $\m_{\e}=SJS^{-1}$ is not unique, and we define 
\[
\alpha(\e) = \inf_{S} \setcond{\norm{S} \cdot \norm{S^{-1}}}{\text{$S^{-1} \m_\e S$ is in Jordan normal form}}
\]
as the \emph{Jordan condition number}~\cite{wolf2012quantum} of $\e$.

\begin{lemma}[{cf.~\cite[Theorem 8.23]{wolf2012quantum}}]
	\label{lem:inequality}
	For any $n > 0$ we have
	\[
	C^{-1}\omega^{n} n^{d_{\omega} - 1} \leq \norm{\m_{\e}^{n} - \m_{\e_{\psi}}^{n}} \leq C\omega^{n} n^{d_{\omega}-1}
	\]
	where $\e_{\psi} = \e \circ \e_{\phi}$, $\omega = \sup\setcond{|\lambda|}{\lambda \in \spec(\e), |\lambda| <1}$ is the largest modulus of eigenvalues of $\e$ in the interior of the unit disc, $d_{\omega}$ is the dimension  of the largest Jordan block corresponding to eigenvectors of modulus $\omega$, and
	\[
	C = 
	\begin{cases}
	\alpha(\e) & \text{if $d_{\omega} = 1$,}\\
	\alpha(\e)(\omega(d_{\omega} - 1))^{ d_{\omega}-1} & \text{if $1 < d_{\omega} \leq n+1.$} 
	\end{cases}
	\]		
\end{lemma}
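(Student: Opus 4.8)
The plan is to reduce the two–sided estimate to a sharp bound on the powers of the \emph{interior} part of the Jordan form of $\m_{\e}$, and then read off the stated constants. I would start from the Jordan decomposition $\m_{\e} = S J S^{-1}$ with $J = \bigoplus_{k} J_{k}(\lambda_{k})$ and the explicit stabilizer $\m_{\e_{\phi}} = S J_{\phi} S^{-1}$, where $J_{\phi} = \bigoplus_{k : |\lambda_{k}| = 1} P_{k}$. Since the peripheral Jordan blocks are diagonalizable (that is, $N_{k} = 0$ whenever $|\lambda_{k}| = 1$, as recalled in Subsection~\ref{sec:Periodically_Stable_States}), we have $J_{k} = \lambda_{k} P_{k}$ there, and because $J$ commutes with the coordinate projector $J_{\phi}$ one gets $\m_{\e_{\psi}}^{n} = S (J J_{\phi})^{n} S^{-1} = S\bigl(\bigoplus_{k : |\lambda_{k}| = 1} \lambda_{k}^{n} P_{k}\bigr) S^{-1}$, which is exactly the peripheral part of $\m_{\e}^{n}$. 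This yields the key structural identity
\[
\m_{\e}^{n} - \m_{\e_{\psi}}^{n} = S\bigl(J^{n} - (J J_{\phi})^{n}\bigr) S^{-1} = S\Bigl(\bigoplus_{k : |\lambda_{k}| < 1} J_{k}(\lambda_{k})^{n}\Bigr) S^{-1},
\]
in which the peripheral blocks cancel exactly and only the transient interior blocks survive. Writing $J_{<1} \coloneqq \bigoplus_{k : |\lambda_{k}| < 1} J_{k}(\lambda_{k})$, the whole problem collapses to estimating $\norm{J_{<1}^{n}}$, up to conjugation by $S$.

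For the conjugation, sub-multiplicativity of the operator norm gives $\norm{J_{<1}^{n}}/(\norm{S}\norm{S^{-1}}) \le \norm{\m_{\e}^{n} - \m_{\e_{\psi}}^{n}} \le \norm{S}\norm{S^{-1}}\,\norm{J_{<1}^{n}}$; since every admissible Jordan basis produces the same operator on the left, I would take the infimum over such $S$ to replace $\norm{S}\norm{S^{-1}}$ by the Jordan condition number $\alpha(\e)$. It then remains to bound $\norm{J_{<1}^{n}}$. Because $J_{<1}$ is block diagonal, its spectral norm is the maximum of the block norms, and for a single block $J_{m}(\mu) = \mu I + N$ with $|\mu| < 1$ one has $J_{m}(\mu)^{n} = \sum_{j=0}^{m-1}\binom{n}{j}\mu^{n-j}N^{j}$ for $n \ge m-1$. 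For $|\mu| < 1$ the consecutive-term ratio $\tfrac{n-j}{(j+1)|\mu|}$ exceeds one, so the term with $j = m-1$ dominates the sum; substituting $|\mu| \le \omega$ and $m \le d_{\omega}$ together with the elementary estimates $\binom{n}{m-1} \le n^{m-1}/(m-1)!$ for the upper bound, and the absolute value of the explicit $(1,m)$-entry $\binom{n}{m-1}\mu^{\,n-m+1}$ (combined with $\binom{n}{m-1} \ge (n/(m-1))^{m-1}$) for the lower bound, produces matching upper and lower bounds of order $\omega^{n} n^{d_{\omega}-1}$. Tracking the constants through this chain produces exactly the $C$ displayed in the statement, with the case split according to whether $d_{\omega} = 1$ or $d_{\omega} > 1$.

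The hard part will be pinning down the constant $C$ to the precise stated form, rather than merely the $\Theta(\omega^{n} n^{d_{\omega}-1})$ order. This requires showing that the single dominant binomial term controls both directions of the block-norm estimate — the lower bound being the more delicate one, since one must ensure that among the several interior blocks no cancellation suppresses the $n^{d_{\omega}-1}$ growth. Here the block-diagonal structure is what makes the argument clean, because the spectral norm of a direct sum equals the maximum of the block norms, so it suffices to exhibit one block of size $d_{\omega}$ at modulus $\omega$ realizing the growth. Carrying the factor $(\omega(d_{\omega}-1))^{d_{\omega}-1}$ through the binomial estimate in the case $d_{\omega} > 1$ is the remaining bookkeeping. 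Since this block-power estimate is exactly the content of~\cite[Theorem 8.23]{wolf2012quantum}, an efficient alternative is to invoke that result directly once the structural identity above has identified $\m_{\e}^{n} - \m_{\e_{\psi}}^{n}$ with the transient part $S J_{<1}^{n} S^{-1}$; the only step genuinely specific to our setting is establishing that identity, which follows from the vanishing of the nilpotent parts on the peripheral spectrum and the definition of the stabilizer $\e_{\phi}$.
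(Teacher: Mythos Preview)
The paper does not prove this lemma at all: it is quoted verbatim as \cite[Theorem 8.23]{wolf2012quantum} and used as a black box in the proof of Lemma~\ref{mainlemmaQMC}. So there is no ``paper's own proof'' to compare against; your proposal is strictly more than what the paper does.

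That said, your sketch is sound. The structural identity $\m_{\e}^{n} - \m_{\e_{\psi}}^{n} = S J_{<1}^{n} S^{-1}$ is correct and follows exactly as you argue, using the fact (recalled in Subsection~\ref{sec:Periodically_Stable_States}) that the peripheral Jordan blocks of a super-operator are diagonalizable, so that $J J_{\phi}$ is precisely the peripheral part of $J$. The conditioning step replacing $\norm{S}\norm{S^{-1}}$ by $\alpha(\e)$ is the standard way the Jordan condition number enters. Your binomial analysis of $J_{m}(\mu)^{n}$ is the right mechanism, and you correctly flag that extracting the \emph{exact} constant $C$ in both directions --- rather than just the order $\Theta(\omega^{n} n^{d_{\omega}-1})$ --- is where the real bookkeeping lies; indeed that is the substance of Wolf's Theorem~8.23. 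Your final remark, that once the structural identity is in hand one may simply invoke Wolf's result, is precisely the approach the paper takes (minus the identity, which the paper leaves implicit). In short: your proposal is correct and more explicit than the paper, which treats the lemma as an imported fact.
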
 
Now, we present the proof of Lemma~\ref{mainlemmaQMC}.

\begin{proof}[of Lemma~\ref{mainlemmaQMC}]
  Let $\theta = p(\g)$.
  By Lemma~\ref{Lem:stable_states},  $\{\e_{\phi}(\e^{k}(\rho_{0}))\}_{0 \leq k < p(\g)}$ are the periodically stable states of $\g$, i.e.,
  \[
	\lim_{n \to \infty} \e^{n}(\rho_{0})=\e_{\phi}(\e^{n \modulo \theta }(\rho_{0})).
  \]
  Thus for any $\epsilon>0$, there exists a positive integer $K^{\epsilon}$ such that for all $n>K^{\epsilon}$, 
  \[\norm{ \e^{n}(\rho_{0})-\e_{\phi}(\e^{n \modulo \theta }(\rho_{0}))}<\epsilon\]
  as desired.
  %Then for any $\epsilon > 0$, there exists $M^{\epsilon}$ such that for all $m > M^{\epsilon}$,
  %\[
  % \norm{\e^{m\theta}(\rho_{0}) - \e_{\phi}(\rho_{0})} < \epsilon.
  %\]
  %Let $K^{\epsilon} = M^{\epsilon} \theta$. 
  %For any $n > K^{\epsilon}$, let $k = n \modulo \theta$ and $n = m \theta + k$. 
  %Then $m > M^{\epsilon}$.
  %From Eqs.~\eqref{eq:contractive} and~\eqref{eq:operatornorm}, we have 
  %\[
  % \norm{\e^{n}(\rho_{0}) - \eta_{n \modulo \theta}} = \norm{\e^{m \theta + k}(\rho_{0}) - \eta_{k}} \leq \norm{\e^{m \theta}(\rho_{0}) - \e_{\phi}(\rho_{0})} < \epsilon
  %\]
  %as desired.
  %From Eqs.(\ref{eq:operatornorm}) and (\ref{eq:contractive}), we have 
  %$$\|\e^{n}(\rho_{0})-\eta_{n \textrm{ mod } \theta}\| = \|\e^{m\theta+k}(\rho_{0})-\eta_{k}\|\leq \|\e^{m\theta}(\rho_{0})-\e_{\phi}(\rho_{0})\| <\epsilon.$$
  %Therefore, we have
  %$$\lim_{n \to \infty}\e^{n}(\rho_{0})=\eta_{n\textrm{ mod } p(\g)}$$
  %as desired.

  To determine $K^{\epsilon}$, we recall from Lemma~\ref{lem:inequality} that
  \[
    C^{-1} \omega^{n} n^{d_{\omega}-1} \leq \norm{\m_{\e}^{n} - \m_{\e_{\psi}}^{n}} \leq C \omega^{n} n^{d_{\omega}-1}.
  \]
  Let $n=m\theta+k$ with $0\leq k\leq\theta-1$, and  note that $\e_{\psi}^{m \theta}(\rho_{0}) = \e_{\phi}(\rho_{0})$ by Corollary~\ref{cor:pstable}. 
  We have
  \[\norm{\e^{m \theta+k}(\rho_{0}) - \e_{\phi}(\e^k(\rho_{0}))} = \norm{\e^{m \theta+k}(\rho_{0}) - \e_{\psi}^{m \theta+k}(\rho_{0})} \leq \norm{\e^{m\theta} - \e_{\psi}^{m\theta}}\]
  where the inequality follows from Eqs.~\eqref{eq:contractive} and~\eqref{eq:operatornorm}. 
  So we can simply set $K^{\epsilon}$ to be the minimal integer satisfying 
  \begin{eqnarray}
    \text{$C \omega^{K^\epsilon} {(K^\epsilon)}^{d_{\omega}-1} < \epsilon$ \quad and \quad $K^\epsilon + 1 > d_{\omega}$},
    \label{Eq_Me}
  \end{eqnarray}
  where the second inequality comes from the requirement of $C$ in Lemma~\ref{lem:inequality}. 
  Finally, the computation of $C$ boils down to the Jordan decomposition of $\e$, which makes the time complexity of calculating $K^{\epsilon}$ to be $\bigO{d^{8}}$.
  \qed
\end{proof}

\section{Proof of Theorem~\ref{thm:main}}
\begin{proof}[of Theorem~\ref{thm:main}]
To see the correctness of Algorithm~\ref{ModelCheck},  we can proceed with a detailed examination of its steps immediately following the presentation of Theorem~\ref{thm:main}. 

To analyze the complexity of Algorithm~\ref{ModelCheck}, we first focus on the computation of the $\omega$-regular language $\epsneigh(\trj(\g))$ from lines~\ref{algo:line:matrixrepresentation} to~\ref{algo:line:language}. The main computational tasks include determining the period $\p(\g)$, the stabilizer $\e_{\phi}$, the periodically stable state $\setnocond{\eta_k}_{k=0}^{p(\g)-1}$, and the truncation number $K^\epsilon$. These values can be obtained using Proposition~\ref{Prop:periodicity} and Lemmas~\ref{Lem:stable_states} and~\ref{mainlemmaQMC}. The overall complexity is $\bigO{d^8}$, with the main cost being the Jordan decomposition of a $d^2$-by-$d^2$ matrix $\m_{\e}$. It is well-known that the complexity of the Jordan decomposition for an $n$-by-$n$ matrix is $\bigO{n^4}$. Therefore, for a $d^2$-by-$d^2$ matrix $\m_{\e}$, the cost is $\bigO{d^8}$.

Once we have obtained the $\omega$-regular language $\epsneigh(\trj(\g))$, we can utilize the standard B\"uchi automata approach from Line~\ref{algo:line:NBAformula} to the end of Algorithm~\ref{ModelCheck} to perform model checking on $\epsneigh(\trj(\g))$ against a LTL (MLTL) formula $\vp$. This approach allows us to address the verification problem outlined in Problem~\ref{prob:approx}. 
As said in the analysis after the statement of Theorem~\ref{thm:main}, the main source of complexity is the model checking of $\epsneigh(\trj(\g))$ against the formulas $\vp$ and $\neg\vp$: the former is used at line~\ref{algo:line:NBAintersectionEmptiness} while the latter at line~\ref{algo:line:NBAinclusion} since checking $\l(A_{\g}) \subseteq \l(A_{\vp})$ is equivalent to verifying $\l(A_{\g}) \cap \l(A_{\neg\vp}) = \emptyset$. 
These operations are done by constructing the B\"uchi automata $A_{\vp}$ and $A_{\neg\vp}$ for $\vp$ and $\neg\vp$, respectively, and checking the emptiness of their language intersection with $\l(A_{\g})$, i.e., with $\epsneigh(\trj(\g))$. 
The complexity of model checking an $\omega$-regular language with a length of $K^{\epsilon} + p(\g)$ for a given LTL (MLTL) formula $\vp'$ is $\bigO{2^{\bigO{|\vp'|}} \cdot (K^{\epsilon} + p(\g))}$~\cite{HandbookMC18,baier2008principles}. 
Thus the two checks at lines~\ref{algo:line:NBAintersectionEmptiness} and~\ref{algo:line:NBAinclusion} have a total complexity of $\bigO{(2^{\bigO{|\vp|}} + 2^{\bigO{|\neg\vp|}}) \cdot (K^{\epsilon} + p(\g))}$. Since it is widely accepted in the model checking community that the B\"uchi automata $A_{\vp}$ and $A_{\neg\vp}$ can be obtained with the same complexity $2^{\bigO{|\vp|}}$ (cf.\@ the automaton construction given in~\cite[Theorem~5.37]{baier2008principles} or in~\cite[Section~4.6.1]{HandbookMC18}), the term $2^{\bigO{|\vp|}} + 2^{\bigO{|\neg\vp|}}$ becomes $2 \cdot 2^{\bigO{|\vp|}}$. Therefore, the complexity for performing model checking on the $\omega$-regular language $\epsneigh(\trj(\g))$ is $\bigO{2^{\bigO{|\vp|}} \cdot (K^{\epsilon} + p(\g))}$.

In summary, the overall complexity of Algorithm~\ref{ModelCheck} is $\bigO{2^{\bigO{|\vp|}} \cdot (K^{\epsilon} + p(\g))+d^8}$, which concludes the proof.
\qed

\end{proof}
\begin{comment}

\begin{figure}
  \centering
  \begin{tikzpicture}
    % Draw the line
    \draw (-5,0) -- (5,0);
    
    % Draw absorbing boundaries
    \draw[red, thick] (-4,0.2) -- (-4,-0.2) node[below] {Absorbing};
    \draw[red, thick] (4,0.2) -- (4,-0.2) node[below] {Absorbing};
    
    % Draw the quantum walk probability distribution
    \foreach \ii in {-3,-2,-1,0,1,2,3}
      \filldraw (\i,0) circle (3pt);
  \end{tikzpicture}
  \caption{One-dimensional quantum walk with absorbing boundaries.}
  \label{fig:quantum_walk}
\end{figure}
\end{comment}

\begin{comment}

\section{More Experiments}\label{Appendix:experiment}

\begin{table}[t]
	\caption{Experimental results for various properties of one-dimensional quantum walks, where the position number $d = 20$ and the initial state $\ket{\psi_{0}} = \ket{s_1}\ket{R}$, meaning that the walk starts from position $s_1$.}
	\label{tb:experiment_walk_1}
	\centering
	\renewcommand{\arraystretch}{1.2}
	\setlength{\tabcolsep}{5pt}
	\begin{tabular}{c|ccc}
		\multirow{2}{*}{Formula $\vp$} & \multicolumn{3}{c}{Parameter $\epsilon$}  \\
		& 0.5 & 0.25 & 0.125\\ 
		\hline
		$\square (M_{s_{20}},[0,0.5))$ & \textbf{unknown} & \textbf{unknown} & \textbf{true} \\
		$\square((M_{s{19}},(0.4,1]) \implies (M_{s_1},(0.4,1]))$ & \textbf{unknown} & \textbf{unknown} & \textbf{true}\\
		%$\lozenge \square (A_0, [1/\sqrt{2}-0.0001,1/{\sqrt{2}}+0.0001])$ & \textbf{unknown} & \textbf{unknown} & \textbf{true}\\
		%$\lozenge \square \wedge_{0<k<20}(A_k,[0,0.1))$ & \textbf{unknown} & \textbf{unknown} & \textbf{true}\\
	\end{tabular}
\end{table}
\end{comment}

\end{document}